\documentclass[11pt]{article}
\usepackage[utf8]{inputenc}
\usepackage{amsfonts}
\usepackage{braket}
\usepackage[para]{footmisc}
\usepackage{xcolor}
\usepackage{qcircuit}
\usepackage{amsmath}
\usepackage{enumerate}
\usepackage[pagebackref]{hyperref}
\hypersetup{
   colorlinks,
   linkcolor={red!100!black},
   citecolor={green!100!black},
}
\usepackage{color, colortbl}
\definecolor{Gray}{gray}{0.85}
\usepackage{multirow}
\usepackage{makecell}
\usepackage{bbm}
\usepackage{amsthm,mathrsfs}
\usepackage{tikz}
\usepackage{mleftright}
\usepackage{amssymb}
\usepackage{algorithm}
\usepackage{algpseudocode}
\algrenewcommand\algorithmicrequire{\textbf{Input:}}
\algrenewcommand\algorithmicensure{\textbf{Output:}}

\algnewcommand{\IIf}[1]{\State\algorithmicif\ #1\ \algorithmicthen}
\algnewcommand{\EndIIf}{\unskip\ \algorithmicend\ \algorithmicif}
\usepackage{natbib}
\usepackage{mathabx}
\usepackage{soul}
\usepackage{geometry}
\usepackage{comment}
\usepackage{authblk}

\newtheorem{theorem}{Theorem}
\newtheorem{lemma}{Lemma}

\newtheorem{fact}{Fact}
\newtheorem{result}{Result}

\usepackage[capitalize,nameinlink,noabbrev]{cleveref}

\Crefname{lemma}{Lemma}{Lemmas}
\Crefname{fact}{Fact}{Facts}

\usepackage{mathtools}

\DeclareMathOperator*{\argmax}{arg\,max}

\def\be{\begin{eqnarray}}
\def\ee{\end{eqnarray}}

\newcommand{\poly}{\operatorname{poly}}

\begin{document}

\title{Improved Quantum Algorithms for Reinforcement Learning Under a Generative Model}
\author[1]{Joao F. Doriguello\thanks{\tiny doriguello@renyi.hu}}

\affil[1]{HUN-REN Alfr\'ed R\'enyi Institute of Mathematics, Budapest, Hungary}
\date{\today}
\maketitle

\begin{abstract} 
    Reinforcement learning is a subfield of machine learning that studies how an agent interacts with an environment in order to extract as large a reward as possible. A standard approach to study such interaction is through Markov Decision Processes (MDPs) and the task of choosing an optimal policy --- a function that tells the agent which action to take. In this work, we study two types of MDPs --- finite-horizon and infinite-horizon discounted --- and propose new quantum algorithms for computing  approximate optimal policies. Our quantum algorithms are based on a new combination of standard value iteration and quantum subroutines like quantum mean estimation and quantum maximum finding, overall enhanced with techniques from sample-optimal classical algorithms. Our resulting query complexities improve upon previous works, thus approaching already established quantum lower bounds.
\end{abstract}

\section{Introduction}
\label{sec:intro}

The paradigm of reinforcement learning (RL)~\cite{sutton1998reinforcement} has received much attention recently due to its success in modeling an agent interaction with a dynamical environment~\cite{aastrom1965optimal,hu2007markov,sato2010markov,bauerle2011markov,feinberg2012handbook,bennett2013artificial,chen2014distributed,steimle2017markov,natarajan2022planning} and due to its applicability to a wide range of problems~\cite{sutton1998reinforcement,Szepesvari2010algorithms,bertsekas2012dynamic,bertsekas2022abstract}. Such interaction is usually studied using Markov Decision Processes (MDPs)~\cite{puterman2014markov}, a simple yet powerful mathematical abstraction wherein an agent chooses actions given the environment's state in order to maximise some kind of reward. Despite their versatility, MDPs suffer a ``curse of dimensionality'' when the number of possible actions or states are sufficiently large, rendering a solution computationally impossible~\cite{powell2007approximate}.

Parallel to RL developments, quantum computation~\cite{nielsen2010quantum} has emerged as a new field promised to deliver substantial speedups to traditional problems like factoring~\cite{shor1994algorithms}, unstructured search~\cite{grover1996fast}, and optimisation~\cite{harrow2009quantum} by exploiting quantum mechanics. Quantum machine learning~\cite{Schuld2015introduction,Biamonte2017quantum,Alchieri2021introduction}, a subfield of quantum computation, has been consolidated as an area of study that can provide quantum speedups to several traditional machine learning problems~\cite{lloyd2013quantum,rebentrost2014quantum,wiebe2015quantum,kerenidis2017quantum,kerenidis2019qmeans,doriguello2023you}, including RL~\cite{dong2008quantum,crawford2018reinforcement,ambainis2025bit}. At the same time, new advances in quantum hardware~\cite{Arute2019quantum,Saggio2021experimental,Acharya2025quantum} point to a future when quantum computers can become a reality.

In this work, we propose novel quantum algorithms to tackle a central problem related to MDPs, obtaining approximately optimal policies, thus alleviating the curse of dimensionality.

\subsection{Problem Setup and Previous Works}

In its most general form, a discrete-time MDP is described by a finite set $\mathcal{S}$ of states that the environment can assume and a finite set $\mathcal{A}$ of actions that the agent can take, a reward function $r:\mathcal{S}\times\mathcal{A}\to[0,1]$ that describes the agent's reward when choosing action $a\in\mathcal{A}$ under the environment's state $s\in\mathcal{S}$, and stochastic kernels $p(\cdot|s,a)$ that denote the probability of transitioning to a new state after the reward is obtained. Beyond this general formalisation, there are different subtypes of MDPs depending on the reward criteria the agent must maximise. Central to this work are \emph{finite-horizon} MDPs, where the agent interacts with the environment for a pre-determined number of steps $H$ called \emph{horizon}, and \emph{infinite-horizon discounted} MDPs, where the agent-environment interaction lasts an infinite amount of time but at every time step the received reward is decreased by a \emph{discount factor} $\gamma\in[0,1)$. The parameter $\Gamma := (1-\gamma)^{-1}$ is usually called \emph{effective horizon}. Finite- and infinite-horizon discounted MDPs are thus described by tuples $\langle \mathcal{S},\mathcal{A},p,r,H\rangle$ and $\langle \mathcal{S},\mathcal{A},p,r,\gamma\rangle$, respectively.

The main agent's objective when interacting with the environment is to maximise some reward criteria by choosing an appropriate \emph{policy} $\pi = (\pi_t)_t$, which is a sequence of probability distributions over $\mathcal{A}$ given $\mathcal{S}$ called \emph{decision rules}, i.e., $\pi_t : \mathcal{S}\to\Delta(\mathcal{A})$. The reward criteria for finite-horizon MDPs is the expected sum of rewards under policy $\pi = (\pi_h)_{h\in[H]}$ and initial state $s\in\mathcal{S}$, called \emph{value function}:
\begin{align*}
    V_0^{\pi,H}(s) = \mathbb{E}\left[\sum_{h=0}^{H-1} r(s_h,a_h) ~\Bigg|~ s_0 = s, a_i \sim \pi_i(s_i), s_{i+1} \sim p(\cdot|s_i, a_i)\right].
\end{align*}
A policy $\pi^\varepsilon$ is $\varepsilon$-\emph{optimal} if $V_0^{\pi^\varepsilon,H}(s) \geq V_0^{\pi,H}(s) - \varepsilon$ for all $\pi$ and $s\in\mathcal{S}$, and the \emph{optimal value function} is $V_0^{\ast,H}(s) = \sup_\pi V_0^{\pi,H}(s)$. A function $V:\mathcal{S}\to[0,H]$ is \emph{$\varepsilon$-optimal} if $\|V - V_0^{\ast,H}\|_\infty \leq \varepsilon$.

On the other hand, the reward criteria for infinite-horizon MDPs is the discounted sum of rewards under policy $\pi = (\pi_t)_{t\in\mathbb{N}}$ and initial state $s\in\mathcal{S}$, also called \emph{value function}:
\begin{align*}
    V_\infty^{\pi,\gamma}(s) = \mathbb{E}\left[\sum_{t=0}^\infty \gamma^t r(s_t,a_t) ~\Bigg|~ s_0 = s, a_i \sim \pi_i(s_i), s_{i+1} \sim p(\cdot|s_i, a_i)\right].
\end{align*}
A policy $\pi^\varepsilon$ is \emph{$\varepsilon$-optimal} if $V_\infty^{\pi^\varepsilon,\gamma}(s) \geq V_\infty^{\pi,\gamma}(s) - \varepsilon$ for all $\pi$ and $s\in\mathcal{S}$, and the \emph{optimal} value function is $V_\infty^{\ast,\gamma}(s) = \sup_\pi V_\infty^{\pi,\gamma}(s)$. A function $V:\mathcal{S}\to[0,\Gamma]$ is \emph{$\varepsilon$-optimal} if $\|V - V_\infty^{\ast,\gamma}\|_\infty \leq \varepsilon$.

One of the fundamental theoretical problems related to MDPs is that of learning an $\varepsilon$-optimal policy and value function. Among the several studied input models, arguably the most important --- and the one considered in this work --- is the so-called \emph{generative model}~\cite{kearns1998finite,Kearns2002sparse,kakade2003sample} where one has full knowledge of state and action spaces and of the reward function, but the transition probabilities $p(\cdot|s,a)$ can only be accessed through an oracle. 

\subsubsection{Classical Setting} 

In the classical setting, i.e., using standard classical computers, generative access to the stochastic kernel $p:\mathcal{S}\times\mathcal{A}\to\Delta(\mathcal{S})$ is given via an oracle or simulator $\mathcal{C}_p$ that, on input $(s,a)\in\mathcal{S}\times\mathcal{A}$, returns $s'\in\mathcal{S}$ with probability $p(s'|s,a)$. One is thus interested in computing an $\varepsilon$-optimal policy with the least amount of queries to oracle $\mathcal{C}_p$.

Classically computing $\varepsilon$-optimal policies under the generative model has been mostly solved for finite- and infinite-horizon discounted MDPs. A long list of works slowly improved the query complexity~\cite{Kearns1999finite,Kearns2002sparse,GheshlaghiAzar2013,wang2017randomized,Sidford2018variance} until Sidford et al.~\cite{sidford2018near} and Li et al.~\cite{li2020breaking} obtained sample-optimal algorithms for finite- and infinite-horizon discounted MDPs with query complexity $\widetilde{O}\big(\frac{H^3SA}{\varepsilon^2} \big)$ and $\widetilde{O}\big(\frac{\Gamma^3SA}{\varepsilon^2}\big)$, respectively, thus matching the lower bounds from~\cite{GheshlaghiAzar2013,sidford2018near}. 

It is well known~\cite{puterman2014markov} that a solution to the equations below provides optimal value functions:
\begin{align*}
    V_H(s) &= 0 \quad\text{and}\quad V_h(s) = \max_{a\in\mathcal{A}}\Big\{r(s,a) + \sum_{s'\in\mathcal{S}} p(s'|s,a) V_{h+1}(s')\Big\} \quad \forall h\in[H] \quad\text{(finite-horizon)},\\
    V_0(s) &= 0 \quad\text{and}\quad V_{t+1}(s) = \max_{a\in\mathcal{A}}\Big\{r(s,a) + \gamma\sum_{s'\in\mathcal{S}} p(s'|s,a) V_{t}(s')\Big\} \quad \forall t\in\mathbb{N} \quad\text{(infinite-horizon)}.
\end{align*}
More precisely, $V_0(s) = V_0^{\ast,H}(s)$ for finite-horizon MDPs and $\lim_{t\to \infty} V_t(s) = V_\infty^{\ast,\gamma}(s)$ for infinite-horizon discounted MDPs. Moreover, one can show that the sequence of maximums in the above equations form an optimal policy, i.e., $\pi_{t+1}(s) \in \argmax_{a\in\mathcal{A}} \{r(s,a) + \gamma \sum_{s'\in\mathcal{S}} p(s'|s,a)V_{t}(s') \}$ and similarly for finite-horizon MDPs. Given sampling access to $p$, the standard value iteration algorithm~\cite{puterman2014markov} obtains an $\varepsilon$-optimal policy for finite-horizon MDPs by approximating $\sum_{s'\in\mathcal{S}} p(s'|s,a)V_{h+1}(s')$ up to additive error $\frac{\varepsilon}{H}$, to a total query complexity of $O\big(\frac{H^5SA}{\varepsilon^2}\big)$. Similarly, in the infinite-horizon discounted setting one approximates $\sum_{s'\in\mathcal{S}} p(s'|s,a)V_{t}(s')$ up to additive error $\frac{\varepsilon}{\Gamma^2}$ and stops after $O\big(\Gamma\log\frac{\Gamma}{\varepsilon}\big)$ iterations, to a total query complexity of $\widetilde{O}\big(\frac{\Gamma^7 SA}{\varepsilon^2}\big)$.

The algorithms of Sidford et al.~\cite{sidford2018near} are a modern and improved version of the aforementioned standard value iteration algorithm and work\footnote{We review the infinite-horizon discounted setting. The finite-horizon case is similar, the main difference being that one starts (and finishes) with $H$ functions.} by starting with a function $V_0$ such that $0 \leq V^{\ast,\gamma}_\infty(s) - V_0(s) \leq 2\epsilon$ and then producing function $V_T$ such that $0 \leq V_\infty^{\ast,\gamma}(s) - V_T(s) \leq \epsilon$, thus halving the initial error. The initial choice $V_0(s) = 0$ yields a $\Gamma$-approximation and by repeating their halving procedure $O\big(\Gamma\log\frac{\Gamma}{\varepsilon}\big)$ times, an $\varepsilon$-optimal policy can be obtained.

In order to achieve the optimal query complexities $\widetilde{O}\big(\frac{H^3SA}{\varepsilon^2} \big)$ and $\widetilde{O}\big(\frac{\Gamma^3SA}{\varepsilon^2}\big)$, Sidford et al.~\cite{sidford2018near} employs three crucial techniques: \emph{monotonicity}, \emph{variance reduction}, and \emph{total-variance} techniques. The monotonicity technique means choosing an approximate value $V_t$ and decision rule $\pi_t$ such that the monotonicity condition\footnote{Here we abuse notation and consider deterministic decision rules $\pi_t:\mathcal{S}\to\mathcal{A}$.}
\begin{align}\label{eq:monotonicity}
	V_t(s) \leq r(s,\pi_t(s)) + \gamma \sum_{s'\in\mathcal{S}} p(s'|s,\pi_t(s)) V_t(s')
\end{align}
is maintained throughout all time steps $t$ of the algorithm.\footnote{The finite-horizon setting does not require the monotonicity condition since the interaction lasts for a finite number of steps.} The monotonicity condition is vital to guarantee that $V_t(s) \leq V_\infty^{\pi_t,\gamma}(s)$ and to obtain an $\varepsilon$-optimal policy from an $\varepsilon$-optimal function. Otherwise, an $\varepsilon$-optimal function only yields an $\Gamma\varepsilon$-optimal greedy policy in the worst case~\cite{bertsekas2022abstract}.

The second technique, variance reduction, rewrites the standard value iteration as
\begin{align*}
    V_{t+1}(s) &\gets \max_{a\in\mathcal{A}}\bigg\{r(s,a) + \sum_{s'\in\mathcal{S}}p(s'|s,a) V_0(s') + \sum_{s'\in\mathcal{S}}p(s'|s,a)(V_{t}(s') - V_0(s'))\bigg\}.
\end{align*} 
The main idea is that the quantity $\sum_{s'\in\mathcal{S}}p(s'|s,a) V_{0}(s')$ can be computed only once at the beginning of the algorithm using $\widetilde{O}\big(\frac{\Gamma^4}{\epsilon^2}\big)$ samples, which saves a factor of $\Gamma$. Regarding the quantity $\sum_{s'\in\mathcal{S}}p(s'|s,a)(V_{t}(s') - V_0(s'))$, since $\|V_{t} - V_0\|_\infty \leq 2\epsilon$ by the monotonicity condition, it can be approximated up to error $\frac{\epsilon}{2\Gamma}$ using $\widetilde{O}(\Gamma^2)$ samples, leading to $\widetilde{O}(\Gamma^3)$ samples after $T = \widetilde{O}(\Gamma)$ time steps.

The third and final technique, total-variance, is based on the fact that the true error accumulates as $\sqrt{\Gamma^3/m}$ given $m$ samples, much less than the naive sum of estimation errors at each time step. This means that one does not require an update error of $\frac{\epsilon}{2\Gamma}$ at each time step and $m = O\big(\frac{\Gamma^3}{\epsilon^2}\big)$ samples suffices for a total error $\epsilon$, thus shaving off the last factor of $\Gamma$.

\subsubsection{Quantum Setting} 

The quantum equivalent of generative access to $p$ is done through an oracle $\mathcal{Q}_p$ (and its inverse) called quantum accessible-environment~\cite{wang2021quantum,wiedemann2022quantum, jerbi2022quantum, zhong2023provably} which is the unitary operator
\begin{align*}
    \mathcal{Q}_{p}: |s\rangle|a\rangle|\bar 0\rangle\rightarrow \sum_{s'\in\mathcal S} \sqrt{p(s'\vert s, a)} |s\rangle|a\rangle|s'\rangle \qquad \forall (s,a)\in\mathcal{S}\times\mathcal{A}.
\end{align*}
If the classical oracle $\mathcal{C}_p$ is itself a computer program and we have access to its source code, then it is straightforward to transform $\mathcal{C}_p$ into its quantum version $\mathcal{Q}_p$ above: one first produces a Boolean circuit (of roughly the same size as the program's time complexity) that produces the samples $s'\sim p(\cdot|s,a)$ and then replaces all classical gates with their quantum counterparts as described, for example, in~\cite[Section~1.4.1]{nielsen2010quantum}.

The same fundamental problem of learning an $\varepsilon$-optimal policy can thus be proposed in a quantum setting where one has access to the oracle $\mathcal{Q}_p$ and its inverse. The first work to tackle this problem in the infinite-horizon discounted setting was due to Wang et al.~\cite{wang2021quantum}, who obtained the query upper bound $\widetilde{O}\big({\min}\big\{\frac{\Gamma^{1.5}SA}{\varepsilon},\frac{\Gamma^3S\sqrt{A}}{\varepsilon}\big\}\big)$ and the query lower bound $\widetilde{\Omega}\big(\frac{\Gamma^{1.5}S\sqrt{A}}{\varepsilon}\big)$. The finite-horizon setting, on the other hand, has been more recently addressed by Luo et al.~\cite{luo2025quantum} and Ambainis et al.~\cite{ambainis2025bit}, both works proposing query upper bounds of $\widetilde{O}\big({\min}\big\{\frac{H^{2.5}SA}{\varepsilon},\frac{H^3S\sqrt{A}}{\varepsilon}\big\}\big)$. Luo et al.~\cite{luo2025quantum} further proved a query lower bound of $\widetilde{\Omega}\big(\frac{H^{1.5}S\sqrt{A}}{\varepsilon}\big)$.

The quantum algorithms of~\cite{wang2021quantum,luo2025quantum,ambainis2025bit} are the combination of two different algorithms. The first one is a quantised version of the aforementioned standard value iteration~\cite{puterman2014markov}. The quantum advantage comes from estimating $\sum_{s'\in\mathcal{S}} p(s'|s,a)V_{t}(s')$ up to additive error $\frac{\varepsilon}{\Gamma}$ using a quantum mean estimation subroutine~\cite{montanaro2015quant,hamoudi2019quantum,hamoudi2021quantum2,hamoudi2021quantum,Kothari2023mean} with $O\big(\frac{\Gamma^2}{\varepsilon}\big)$ queries to $\mathcal{Q}_p,\mathcal{Q}_p^\dagger$ and nesting it into the quantum maximum finding subroutine of D\"urr and H\o{}yer~\cite{durr1996quantum} in order to find $V_{t+1}(s)$, which uses $O\big(\frac{\Gamma^2\sqrt{A}}{\varepsilon}\big)$ queries to $\mathcal{Q}_p,\mathcal{Q}_p^\dagger$ in total. Summing over all $s\in\mathcal{S}$ and $O\big(\Gamma\log\frac{\Gamma}{\varepsilon}\big)$ time steps leads to the final complexity $\widetilde{O}\big(\frac{\Gamma^3S\sqrt{A}}{\varepsilon}\big)$. Wang et al.~\cite{wang2021quantum} maintain the monotonicity condition from \eqref{eq:monotonicity} throughout in order to obtain an $\varepsilon$-optimal policy from an $\varepsilon$-optimal function. The corresponding algorithm for finite-horizon MDPs~\cite{luo2025quantum,ambainis2025bit} is quite similar, with $H$ replacing $\Gamma$ (and without the need for monotonicity). 

The second algorithms of~\cite{wang2021quantum,luo2025quantum,ambainis2025bit} are a quantised version of the modern value iteration algorithm of Sidford et al.~\cite{sidford2018near} explained in the previous section. In the infinite-horizon discounted setting, assuming an initial function $V_0$ such that $0\leq V_\infty^{\ast,\gamma}(s) - V_0(s) \leq 2\epsilon$, the quantities $\sum_{s'\in\mathcal{S}}p(s'|s,a)(V_{t}(s') - V_{0}(s'))$ are estimated using quantum mean estimation up to error $\frac{\epsilon}{2\Gamma}$ using $\widetilde{O}(\Gamma)$ queries to $\mathcal{Q}_p,\mathcal{Q}_p^\dagger$, for a total of $\widetilde{O}(\Gamma^2 SA)$ queries across all $(s,a)\in\mathcal{S}\times\mathcal{A}$ and $\widetilde{O}(\Gamma)$ time steps. On the other hand, the quantities $\sum_{s'\in\mathcal{S}}p(s'|s,a) V_{0}(s')$ are estimated, using a variance-dependent quantum mean estimation subroutine~\cite{montanaro2015quant,hamoudi2019quantum,hamoudi2021quantum2,hamoudi2021quantum,Kothari2023mean}, up to additive error $O\big(\frac{\epsilon\sqrt{\smash[b]{\sigma_0(s,a)}}}{\Gamma^{1.5}}\big)$ using $\widetilde{O}\big(\frac{\Gamma^{1.5}}{\epsilon}\big)$ queries to $\mathcal{Q}_p,\mathcal{Q}_p^\dagger$, where $\sigma_0(s,a) = \sum_{s'\in\mathcal{S}} p(s'|s,a) V_0(s')^2 - \big(\sum_{s'\in\mathcal{S}} p(s'|s,a) V_0(s')\big)^2$ is the variance of $V_0$. The final query complexity is thus $\widetilde{O}\big(\frac{\Gamma^{1.5}SA}{\epsilon}\big)$ across all $(s,a)\in\mathcal{S}\times\mathcal{A}$ and since $\sum_{s'\in\mathcal{S}}p(s'|s,a) V_{0}(s')$ needs to be estimated only once at the start and can be reused at all time steps. By starting with the trivial function $V_0(s) = 0$ and repeating the error-halving step $O\big({\log}\frac{\Gamma}{\varepsilon}\big)$ times, one obtains an $\varepsilon$-optimal policy with $\widetilde{O}\big(\frac{\Gamma^{1.5}SA}{\varepsilon}\big)$ as previously mentioned. For finite-horizon MDPs, one starts with $H$ different functions $V_0^{(0)},\dots,V_{H-1}^{(0)}$ that approximate $0\leq V_h^{\ast,H}(s) - V_h^{(0)}(s) \leq 2\epsilon$, therefore each quantity $\sum_{s'\in\mathcal{S}}p(s'|s,a) V_{h+1}^{(0)}(s')$ must be estimated to sufficient precision involving its variance, which incurs an extra factor of $O(H)$ and leads to the final query complexity of $\widetilde{O}\big(\frac{H^{2.5}SA}{\varepsilon}\big)$.

Interestingly enough, the need for recycling $\sum_{s'\in\mathcal{S}}p(s'|s,a) V_{0}(s')$ across all $\widetilde{O}(\Gamma)$ time steps hinders the use of quantum maximum finding. The situation is even worse in the finite-horizon setting, since unlike the classical case where the same batch of samples can be reused to approximate all quantities $\sum_{s'\in\mathcal{S}}p(s'|s,a) V_{h+1}^{(0)}(s')$ simultaneously, in the quantum setting we must start each calculation anew, which incurs an extra factor $O(H)$ in the query complexity and hinders a full quadratic advantage in $H$.

\subsection{Our Results}

In this work, we propose new quantum algorithms that improve upon the query upper bounds of~\cite{wang2021quantum,luo2025quantum,ambainis2025bit} both for finite- and infinite-horizon discounted MDPs.

\begin{result}\label{res:res1}
    Let $\langle\mathcal{S},\mathcal{A},p,r,H\rangle$ be a finite-horizon MDP. There is a quantum algorithm that outputs an $\varepsilon$-optimal policy with high probability and query complexity $\widetilde{O}\big(\frac{H^{2.5}S\sqrt{A}}{\varepsilon}\big)$.
\end{result}

\begin{result}\label{res:res2}
    Let $\langle\mathcal{S},\mathcal{A},p,r,\gamma\rangle$ be an infinite-horizon discounted MDP. There is a quantum algorithm that outputs an $\varepsilon$-optimal policy with high probability and query complexity $\widetilde{O}\big(\frac{\Gamma^{2.5}S\sqrt{A}}{\varepsilon}\big)$.
\end{result}

\begin{table*}[t]
\centering
\caption{Classical and quantum query bounds for computing an $\varepsilon$-optimal policy and value function for finite- and infinite-horizon discounted MDPs in the generative model setting. Here $S$ and $A$ are the size of the state and action spaces, respectively, while $\Gamma = (1-\gamma)^{-1}$. All bounds are up to $\poly\log$ factors and assume a constant failure probability $\delta$.}
\def\arraystretch{2}
\resizebox{\linewidth}{!}{
\begin{tabular}{|c|cc|ccc|}
\hline

\multirow{2}{*}{Setting}  & \multicolumn{2}{c|}{Classical query complexity} & \multicolumn{3}{c|}{Quantum query complexity} \\ 
\cline{2-6}

& \multicolumn{1}{c|}{Upper bound} & Lower bound & \multicolumn{2}{c|}{Upper bound} & Lower bound \\ \hline

Finite-horizon & \multicolumn{1}{c|}{\makecell[c]{$\frac{H^3SA}{\varepsilon^2}$  \\ \cite{sidford2018near,li2020breaking}}} & \makecell[c]{$\frac{H^3SA}{\varepsilon^2}$ \\ \cite{sidford2018near}} & \multicolumn{1}{c|}{\makecell[c]{$\min\!\left\{\!\frac{H^{2.5}SA}{\varepsilon},\frac{H^3S\sqrt{A}}{\varepsilon}\right\}$ \\ \cite{ambainis2025bit,luo2025quantum}}} & \multicolumn{1}{c|}{\cellcolor{blue!20}\makecell[c]{$\frac{H^{2.5}S\sqrt{A}}{\varepsilon}$ \\ \Cref{res:res1}}} & \makecell[c]{$\frac{H^{1.5}S\sqrt{A}}{\varepsilon}$ \\ \cite{luo2025quantum} } \\ \hline

Infinite-horizon & \multicolumn{1}{c|}{\makecell[c]{$\frac{\Gamma^3SA}{\varepsilon^2}$  \\ \cite{sidford2018near,li2020breaking}}} & \makecell[c]{$\frac{\Gamma^3SA}{\varepsilon^2}$ \\ \cite{GheshlaghiAzar2013}} & \multicolumn{1}{c|}{\makecell[c]{$\min\!\left\{\!\frac{\Gamma^{1.5}SA}{\varepsilon},\frac{\Gamma^3S\sqrt{A}}{\varepsilon}\right\}$ \\ \cite{wang2021quantum}}} & \multicolumn{1}{c|}{\cellcolor{blue!20}\makecell[c]{$\frac{\Gamma^{2.5}S\sqrt{A}}{\varepsilon}$ \\ \Cref{res:res2}}} & \makecell[c]{$\frac{\Gamma^{1.5}S\sqrt{A}}{\varepsilon}$ \\ \cite{wang2021quantum} } \\ \hline

\end{tabular}}
\label{table:results}
\end{table*}

Our new algorithms are a quantised version of the standard value iteration but further improved with the monotonicity and total-variance techniques from~\cite{sidford2018near}. This includes using quantum mean estimation \emph{with variance}~\cite{Kothari2023mean} to approximate the mean $\mu_{t}(s,a) = \sum_{s'\in\mathcal{S}}p(s'|s,a)V_{t}(s')$ together with its variance $\sigma_{t}(s,a)$ in a superposition fashion in the form of an oracle
\begin{align*}
	\mathcal{O}^{\rm mean}_s : |a\rangle |\bar{0}\rangle \mapsto |a\rangle\big(\sqrt{1-\delta_a}|\widehat{\mu}_{t}(s,a)\rangle|\widetilde{\mu}_{t}(s,a)\rangle|\widetilde{\sigma}_{t}(s,a)\rangle|\operatorname{garbage}(a)\rangle + \sqrt{\delta_a}|{\perp}_a\rangle\big),
\end{align*}
which will serve as the base unitary for quantum maximum finding. Here $\delta_a\in[0,1)$ is the failure probability of quantum mean estimation within the branch of the wave function described by $|a\rangle$, $\widetilde{\mu}_{t}(s,a)$ and $\widetilde{\sigma}_{t}(s,a)$ are good enough approximations of $\mu_{t}(s,a)$ and $\sigma_{t}(s,a)$, $|\operatorname{garbage}(a)\rangle$ are ``garbage'' unit vectors, and $|{\perp}_a\rangle$ is a unit vector orthogonal to $|\widetilde{\mu}_{t}(s,a)\rangle|\widetilde{\sigma}_{t}(s,a)\rangle|\operatorname{garbage}(a)\rangle$. The quantity $\widehat{\mu}_{t}(s,a)$ is a one-sided-error version of $\widetilde{\mu}_{t}(s,a)$ necessary to guarantee the approximate optimality of the derived greedy policies and the reason why we need an estimate of the variance $\sigma_{t}(s,a)$. The output of quantum maximum finding is then an one-side approximation of $V_{t}(s)$ with additive error $O\big(\frac{\varepsilon\sqrt{\smash[b]{\sigma_{t}(s,\pi_{t+1}(s))}}}{\Gamma^{1.5}}\big)$. Using the total-variance technique, after $T = O\big(\Gamma\log\frac{\Gamma}{\varepsilon}\big)$ time steps the function $V_T$ is $\varepsilon$-close to $V_\infty^{\ast,\gamma}$. By the monotonicity condition, its associated greedy policy is $\varepsilon$-optimal. The argument for finite-horizon MDPs is similar, with $H$ replacing $\Gamma$.

In contrast to~\cite{wang2021quantum,luo2025quantum,ambainis2025bit} that either (i) use quantum maximum finding but does not include any variance in the error propagation or (ii) take variance into account but not quantum maximum finding, here we perform both, which ultimately yields a combined dependence on $\sqrt{A}$ and a better dependence than $\Gamma^3$ or $H^3$. Our finite-horizon complexity subsumes the ones of~\cite{luo2025quantum,ambainis2025bit}, while our infinite-horizon complexity improves the part $\widetilde{O}\big(\frac{\Gamma^3S\sqrt{A}}{\varepsilon}\big)$ from~\cite{wang2021quantum}, leading to a new overall complexity of $\widetilde{O}\big({\min}\big\{\frac{\Gamma^{1.5}SA}{\varepsilon},\frac{\Gamma^{2.5}S\sqrt{A}}{\varepsilon}\big\}\big)$. Our results are thus a measurable improvement on past works and a new step towards the lower bounds $\widetilde{\Omega}\big(\frac{H^{1.5}S\sqrt{A}}{\varepsilon}\big)$ and $\widetilde{\Omega}\big(\frac{\Gamma^{1.5}S\sqrt{A}}{\varepsilon}\big)$.

\paragraph*{Acknowledgments.} JFD is supported by the Lendület “Momentum” program of the Hungarian Academy of Sciences under grant agreement no.\ LP2025-8/2025.

\paragraph*{Artificial intelligence research statement.} ChatGPT has been used to check for typos and mistakes. Ideas, proofs, and writing are the product of the author alone.

\section{Preliminaries}

For $n\in\mathbb{N} := \{0,1,2,\dots\}$, let $[n]:=\{0, \dots, n-1\}$. Given a finite set $\mathcal{S}$, let $\mathscr{B}(\mathcal{S})$ be the space of all bounded Borel measurable real-valued functions on $\mathcal{S}$, which can be interpreted as $\mathbb{R}^S$ for $S = |\mathcal{S}|$, and let $\Delta(\mathcal{S})$ denote the probability simplex over $\mathcal{S}$. Given ${u}\in\mathscr{B}(\mathcal{S})$, its $\ell_\infty$-norm is $\| u\|_\infty := \max_{s\in\mathcal{S}} |u(s)|$. We use $\mathbf{1}\in\mathscr{B}(\mathcal{S})$ to denote the all-ones function, $\mathbf{1}(s) = 1$. Given $u,v\in\mathscr{B}(\mathcal{S})$, let $uv\in\mathscr{B}(\mathcal{S})$ be the function defined as $(uv)(s) = u(s)v(s)$ $\forall s\in\mathcal{S}$, and $u \leq v \iff u(s) \leq v(s)$ $\forall s\in\mathcal{S}$.

\subsection{Quantum Computation}

Little background on quantum computation is needed for this paper and we refer the reader to~\cite{nielsen2010quantum} for more information. The quantum state of a quantum system is described by a unit vector from a Hilbert space denoted by the ket notation $|\cdot\rangle$. An $n$-qubit system is described by a unit vector in $\mathbb{C}^{2^n}$. The evolution of a quantum state $|\psi\rangle\in\mathbb{C}^{2^n}$ is described by a unitary operator $U\in\mathbb{C}^{2^n\times 2^n}$, $UU^\dagger = I$ where $U^\dagger$ is the Hermitian conjugate of $U$. In order to extract classical information from a quantum system, a quantum measurement is usually performed, which is a set $\{E_m\}_m$ of positive operators $E_m \succ 0$ that sum to identity, $\sum_m E_m = I$. The probability of measuring $E_m$ on $|\psi\rangle$ is $\langle\psi|E_m|\psi\rangle$. 
We use $\ket{\bar 0}$ to denote the state $\ket{0}\otimes\cdots\otimes\ket{0}$ where the number of qubits is clear from context.

In this paper, we employ quantum oracles for functions and probability distributions. We say we have quantum access to $u\in\mathscr{B}(\mathcal{S})$ if we have access to the unitary $\mathcal{O}_u:|s\rangle|\bar{0}\rangle \mapsto |s\rangle|u(s)\rangle$ and its inverse, and we say we have quantum sampling access to $p\in\Delta(\mathcal{S})$ if we have access to the unitary $\mathcal{O}_p:|\bar{0}\rangle \mapsto \sum_{s\in\mathcal{S}} \sqrt{p(s)}|s\rangle$ and its inverse. Quantum access to a function is usually referred to as a quantum random access memory (QRAM)~\cite{giovannetti2008architectures,giovannetti2008quantum,jaques2023qram,allcock2023constant}. It is possible to build quantum access to  $u\in\mathscr{B}(\mathcal{S})$ in $\widetilde{O}(S)$ time.

We shall require the following quantum subroutines.
\begin{fact}[Quantum maximum finding \cite{durr1996quantum}]\label{fact:quantum_minimum_finding}
    Given quantum access to $u\in\mathscr{B}(\mathcal{S})$ via oracle $\mathcal{O}_u$, there is a quantum algorithm that finds $\max_{s\in\mathcal{S}} u(s)$ and $\argmax_{s\in\mathcal{S}} u(s)$ with probability $1-\delta$ using $O(\sqrt{S}\log \frac{1}{\delta})$ queries to $\mathcal{O}_u,\mathcal{O}_u^\dagger$.
\end{fact}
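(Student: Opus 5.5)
The approach is the standard Dürr--Høyer argument: a randomized threshold search driven by Grover search with an unknown number of marked items, followed by a probability-amplification step. Enumerate $\mathcal{S}$ as $\{0,\dots,S-1\}$. Build a marking oracle for the predicate $f_y(s) = [u(s) > u(y)]$ from one call to $\mathcal{O}_u$, one reversible comparison against the classically stored value $u(y)$, and one call to $\mathcal{O}_u^\dagger$ to uncompute. The algorithm keeps a current index $y$, initialised uniformly at random (computing $u(y)$ costs one query). It then repeats the following. Run the Boyer--Brassard--Høyer--Tapp (BBHT) exponential-search version of Grover with marking oracle $f_y$. If this returns some $s$ with $u(s) > u(y)$, verified with one extra query, set $y \gets s$. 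Stop when the total number of queries used exceeds $c\sqrt{S}$ for a suitable constant $c$. Output $y$ and $u(y)$.

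The key steps are as follows.
\begin{enumerate}[(i)]
\item Use the BBHT guarantee: if $t \ge 1$ items are marked, the expected number of queries until a marked item is returned is $O(\sqrt{S/t})$, and each returned item is uniformly distributed among the marked items.
\item Rank the elements by decreasing value, so rank $1$ is the maximum; break ties by index so the ranking is strict. Show that the element of rank $r$ is ever selected as threshold with probability exactly $1/r$. This follows because, at each update, the new threshold is uniform over the elements better than the current one. By induction, the process is equivalent to the classical ``record'' argument on a random permutation.
\item Conclude that the expected total number of queries before reaching the maximum is at most
\begin{align*}
\sum_{r=2}^{S} \frac{1}{r}\, O\!\left(\sqrt{\frac{S}{r-1}}\right) = O(\sqrt{S}),
\end{align*}
since the series $\sum_r r^{-1}(r-1)^{-1/2}$ converges.
\item Apply Markov's inequality: with query budget $c\sqrt{S}$ and suitable $c$, the final $y$ is the argmax with probability at least $1/2$.
\item Boost the success probability by running $O(\log\frac{1}{\delta})$ independent repetitions and outputting the candidate with the largest $u$-value. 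Each candidate's value is already known, so no extra queries are needed. The output fails only if all repetitions fail, which happens with probability at most $2^{-O(\log(1/\delta))} \le \delta$. The total cost is $O(\sqrt{S}\log\frac{1}{\delta})$.
\end{enumerate}

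I expect the main obstacle to be step (ii) together with its interaction with the BBHT analysis. One must justify that the query cost at a given threshold depends only on the number of marked elements, and that the uniformity of BBHT outputs holds exactly, or closely enough, to give the $1/r$ selection probability. I would handle this by noting that Grover amplification acts symmetrically on marked states, so a measured marked outcome is exactly uniform over them. Unsuccessful BBHT rounds return unmarked items, which are rejected by the verification query and do not change $y$. Ties in $u$ are harmless once the tie-breaking ranking is fixed, since the strict comparison in $f_y$ means equal values are never marked.
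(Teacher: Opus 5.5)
The paper does not prove this fact; it only cites D\"urr and H\o{}yer. Your outline is exactly their argument: BBHT search below a moving threshold, the $1/r$ selection lemma, the convergent sum $\sum_r r^{-1}(r-1)^{-1/2}$, Markov's inequality, and independent repetitions that keep the best candidate. The overall route and the query accounting are fine.

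The one step that is wrong as written is your treatment of ties. The original analysis avoids this by assuming all values are distinct.

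With $f_y(s)=[u(s)>u(y)]$, the items marked at threshold $y$ are those of strictly larger \emph{value}, not those of better tie-broken \emph{rank}. So the next threshold is not uniform over the better-ranked items, and the number $t$ of marked items can be far smaller than $r-1$. This breaks both the ``exactly $1/r$'' claim in (ii) and the per-threshold cost $O(\sqrt{S/(r-1)})$ in (iii). For example, take $u=(1,1,0)$ with ties broken by index: the rank-$1$ item is selected with probability $\tfrac13+\tfrac13\cdot\tfrac12=\tfrac12$, not $1$.

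The statement itself survives, and either of two easy repairs closes the gap.
\begin{enumerate}[(a)]
\item Mark $s$ iff $(u(s),s)$ exceeds $(u(y),y)$ lexicographically. Since $s$ is in the register and $y$ is classical, this costs no extra queries. The order becomes strict, and (ii)--(iii) hold verbatim, with the rank-$1$ item an argmax.
\item Keep the strict value comparison and work with value levels. Sort the distinct values decreasingly and let $c_1<c_2<\dots<c_k=S$ be their cumulative multiplicities. The first threshold that lands among the $c_j$ largest items is uniform over them. Hence level $j$ is visited with probability $(c_j-c_{j-1})/c_j$, at expected cost $O(\sqrt{S/c_{j-1}})$. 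The bound $\frac{c_j-c_{j-1}}{c_j\sqrt{c_{j-1}}}\le 2\bigl(c_{j-1}^{-1/2}-c_j^{-1/2}\bigr)$ then telescopes to an $O(\sqrt{S})$ total.
\end{enumerate}
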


\begin{fact}[Quantum mean estimation with variance {\cite[Theorem~1.1]{Kothari2023mean}}]\label{fact:quantum_mean_estimation_variance}
    Let $\epsilon>0$ and $\delta\in (0, 1)$. Assume quantum access to function $u:\mathcal{S}\to \mathbb{R}$ via oracle $\mathcal{O}_u$ and quantum sampling access to probability distribution $p\in\Delta(\mathcal{S})$ via oracle $\mathcal{O}_p$. Let $\sigma := \sum_{s\in\mathcal{S}} p(s) u(s)^2 - \big(\sum_{s\in\mathcal{S}} p(s) u(s)\big)^2$. There is a quantum algorithm that computes $\widetilde \mu\in\mathbb{R}$ such that $| \widetilde\mu - \sum_{s\in\mathcal{S}} p(s) u(s)|\leq \sqrt{\sigma}\epsilon$ with success probability $1-\delta$ using $O\big(\frac{1}{\epsilon}\log\frac{1}{\delta}\big)$ queries to $\mathcal{O}_u,\mathcal{O}_p$, and their inverses.
\end{fact}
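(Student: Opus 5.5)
Since this is a black-box result quoted from~\cite{Kothari2023mean}, the plan is to rebuild it from amplitude estimation via a standard reduction. The main line would follow Montanaro~\cite{montanaro2015quant}, with a sharpening step at the end to reach the log-free bound. First, it suffices to get success probability $2/3$. Running that procedure $O(\log\frac{1}{\delta})$ times and taking the median gives failure probability $\delta$ by a Chernoff bound, because the median can only be wrong if more than half the runs are wrong. So the main task is: with $O(1/\epsilon)$ queries to $\mathcal{O}_u,\mathcal{O}_p$ and their inverses, output $\widetilde\mu$ with $|\widetilde\mu-\mu|\le\sqrt{\sigma}\,\epsilon$ with probability $2/3$, where $\mu=\sum_s p(s)u(s)$.

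\textbf{Step 1 (centering and scaling).} Measure $\mathcal{O}_p|\bar 0\rangle$ a constant number of times to obtain classical samples $s_i$, and query $u(s_i)$. By Chebyshev, the empirical mean $m$ satisfies $|m-\mu|\le c\sqrt\sigma$ with probability $0.9$. The same samples also give a rough scale $\hat\sigma$ of $\sigma$; to make $\hat\sigma$ reliable I would use a median-of-means or quantile estimate of $|u-m|$. Then set $v=(u-m)/\sqrt{\hat\sigma}$, so that $\mathbb{E}_p[v^2]=O(1)$ and the goal becomes estimating $\mathbb{E}_p[v]$ to additive error $\epsilon$. If $\sigma=0$, then $u$ is almost surely constant and $m$ is already exact.

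\textbf{Step 2 (bounded-second-moment estimation).} Write $v=v^+-v^-$ and split each part dyadically into the pieces $v\,\mathbf 1[2^{k-1}\le |v|<2^k]$ for $k=1,\dots,O(\log\frac1\epsilon)$. The tail $|v|\ge 1/\epsilon$ contributes at most $\epsilon\,\mathbb{E}[v^2]=O(\epsilon)$ to the mean and can be dropped. Each piece, divided by $2^k$, lies in $[0,1]$. Its mean can then be estimated by amplitude estimation: build a unitary that rotates an ancilla by $\sqrt{v(s)/2^k}$, controlled on the output of $\mathcal{O}_u$ composed with $\mathcal{O}_p$. Amplitude estimation gives error $O(\sqrt{a}/M+1/M^2)$ with $M$ queries, where $a$ is the amplitude. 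By Chebyshev, the $k$-th bucket has probability mass $O(4^{-k})$, so its amplitude is $a_k=O(2^{-k})$. Allocating $M_k$ queries so that the bucket errors sum to $O(\epsilon)$ recovers Montanaro's bound of $\widetilde O(1/\epsilon)$ queries, with polylogarithmic overhead.

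\textbf{Main obstacle.} The difficult part is removing those polylogarithmic factors to reach exactly $O(1/\epsilon)$, which is the content of~\cite[Theorem~1.1]{Kothari2023mean}. The first thing I would try is their route. Instead of estimating the buckets separately with a union bound, encode all of them at once into a single unitary, namely an ancilla rotation by an angle that depends smoothly on the truncated value $v(s)$. Then run phase estimation on the associated Grover iterate, so that the bounded-variance, heavy-tailed structure is handled through one phase-estimation call. A lighter alternative is to allocate error budgets geometrically across the buckets, e.g.\ error $\epsilon\,2^{-|k-k_0|/4}$ for bucket $k$, so that the query counts form a convergent series. Since the surrounding paper only uses this Fact inside $\widetilde O(\cdot)$ bounds, Montanaro's version from Step 2 would already suffice for its results.
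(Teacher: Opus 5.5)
The paper gives no argument for this Fact; it simply imports \cite[Theorem~1.1]{Kothari2023mean}, with $\epsilon=1/n$ and $\sqrt{\sigma}$ the standard deviation. Your reconstruction therefore has to stand on its own, and it already fails at Step~1.

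\textbf{Step~1 cannot work.} No procedure using a constant number of classical samples, or even $O(1/\epsilon)$ quantum queries, can return $\hat\sigma$ with $c\sigma\le\hat\sigma\le C\sigma$. Take $u\in\{0,M\}$ with $p(u=M)=\eta$, so $\sigma\approx\eta M^2$. For $\eta\ll 1$, a constant-size sample is all zeros with high probability. Then the empirical variance, any median-of-means, and any quantile of $|u-m|$ all return $0$. Moreover, telling $\eta$ from $2\eta$ needs $\Omega(1/\sqrt{\eta})$ queries, and $\eta$ may lie far below $\epsilon^2$. Markov and Chebyshev only give the one-sided control $\hat\sigma\lesssim\sigma$. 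Yet everything in Step~2 rests on $\mathbb{E}_p[v^2]=O(1)$, i.e.\ on $\hat\sigma\gtrsim\sigma$: both the tail bound $\epsilon\,\mathbb{E}_p[v^2]=O(\epsilon)$ and the bucket masses $O(4^{-k})$ use it. This is exactly the difference between Montanaro's bounded-variance estimator, which assumes an a priori bound on $\sigma$, and the unknown-variance statement you must prove.

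The missing idea, which is how the unknown-variance estimators of Hamoudi and of Kothari--O'Donnell proceed, is never to estimate $\sigma$. After rough centering at $m$, use quantum quantile estimation with $O(1/\epsilon)$ queries to find a threshold $b$ with $\Pr_p[|u-m|\ge b]\approx\epsilon^2$. Chebyshev gives $b\lesssim\sqrt{\sigma}/\epsilon$, and Cauchy--Schwarz bounds the truncation bias by $O(\sqrt{\sigma}\,\epsilon)$. All later estimation errors are then charged to the unknown $\sigma$ rather than to a normalisation.

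\textbf{The log-free bound is not reached.} Even granting a correct scale, the content of the Fact is $O(\frac1\epsilon\log\frac1\delta)$, and your proposal does not get there.

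\emph{Geometric allocation fails.} Bucket $k$ has amplitude-estimation error $O(x_k/M_k+2^k/M_k^2)$, where $x_k:=2^k\sqrt{a_k}$. Only the second term sums geometrically. The first is constrained only by $\sum_k x_k^2=O(\mathbb{E}_p[v^2])$, so the unknown mass may sit in any bucket. A budget of $\epsilon 2^{-|k-k_0|/4}$ then forces $M_k\gtrsim 2^{|k-k_0|/4}/\epsilon$, a series that grows rather than converges. If instead the mass is spread evenly over all $K=\Theta(\log\frac1\epsilon)$ buckets, Cauchy--Schwarz shows that no allocation of the $M_k$ beats $K^{3/2}/\epsilon$ within this per-bucket analysis. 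The union bound over buckets adds a further $\log K$ factor.

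\emph{Your sketch of the Kothari--O'Donnell route would not work as stated.} Encoding the truncated value into a real ancilla rotation and running amplitude estimation on the resulting Grover iterate makes the error scale with the range $\Theta(1/\epsilon)$ of $v$, not with its standard deviation. That costs roughly $\epsilon^{-3/2}$ queries at best. Their ``Grover with complex phases'' uses the phase oracle $|s\rangle\mapsto e^{i\lambda u(s)}|s\rangle$ precisely to avoid this, and that is the ingredient you would need to supply.

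\emph{Your closing remark.} It is true that a $\widetilde O(1/\epsilon)$ version suffices for the paper, but not for the reason you give. The paper needs error $\theta\sqrt{\sigma_{h+1}(s,a)}$ with $\sigma_{h+1}(s,a)$ unknown and $a$-dependent inside the superposition. It works if you feed Montanaro's estimator the computed bound $\widetilde\sigma_{h+1}(s,a)+2\theta H^2\ge\sigma_{h+1}(s,a)$. That yields error $\theta\sqrt{\widetilde\sigma_{h+1}(s,a)}+\sqrt2\theta^{3/2}H$, exactly the shift the paper subtracts, but it is a different argument from yours.
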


\subsection{Background on Markov Decision Processes}

In this work, we are concerned with two main types of discrete-time MDPs: \emph{finite-horizon} and \emph{infinite-horizon discounted} MDPs. A finite-horizon MDP is described by a tuple $\langle \mathcal{S},\mathcal{A},p,r,H\rangle$, while an infinite-horizon discounted MDP is described by a tuple $\langle \mathcal{S},\mathcal{A},p,r,\gamma\rangle$. The Borel spaces $\mathcal{S}$ and $\mathcal{A}$ are called state and action spaces, respectively, assumed here to be finite with sizes $|\mathcal{S}| = S$ and $|\mathcal{A}| = A$. The measurable bounded function $r:\mathcal{S}\times\mathcal{A}\to[0,1]$ is called \emph{reward function}, while $p:\mathcal{S}\times\mathcal{A}\to\Delta(\mathcal{S})$ are stochastic kernels. Finally, the parameter $H\in\mathbb{N}$ is called \emph{horizon}, while $\gamma\in[0,1)$ is called \emph{discount factor}. Define also $\Gamma := (1-\gamma)^{-1}$ as the effective horizon for infinite-horizon discounted MDPs.

A discrete-time MDP models the interaction between an agent and an environment. At each time step $t\in\mathbb{N}$, the agent chooses an action $a_t\in\mathcal{A}$ given the environment's state $s_t\in\mathcal{S}$, after which a reward $r(s_t,a_t)$ is received and the environment transitions to a new state $s_{t+1}\sim p(\cdot|s_t,a_t)$. Such interaction is performed $H$ times for finite-horizon MDPs, while for infinite-horizon discounted MDPs, it can last indefinitely, however, the received reward is progressively discounted by a factor $\gamma^t$ at time step $t$. The agent chooses action $a_t\in\mathcal{A}$ according to a \emph{randomised Markovian decision rule} $\pi_t:\mathcal{S}\to\Delta(\mathcal{A})$, so that $a_t\sim \pi_t(\cdot|s_t)$. A decision rule $\pi_t$ is \emph{deterministic} if for all $s\in\mathcal{A}$, $\pi_t(a|s) = 1$ for some $a\in\mathcal{A}$; one can equivalently define a deterministic decision rule as a function $\pi_t:\mathcal{S}\to\mathcal{A}$. A \emph{randomised Markovian policy} $\pi = (\pi_t)_t$ is a sequence of randomized decision rules $\pi_t:\mathcal{S}\to\Delta(\mathcal{A})$, and similarly for a \emph{deterministic} policy. A policy $\pi = (\pi_t)_t$ is \emph{stationary} if $\pi_t = \pi_{t'}$ for all $t,t'$. The set of all randomised and deterministic decision rules are denoted by $\mathcal{D}^{\rm R}$ and $\mathcal{D}^{\rm D}$, and the set of all randomised and deterministic policies are denoted by $\Pi^{\rm R}$ and $\Pi^{\rm D}$, respectively. Given a decision rule $d\in\mathcal{D}^{\rm R}$, we employ the notation $d^\infty$ for the stationary policy $(d,d,\dots)$ in infinite-horizon discounted MDPs. Given $d\in\mathcal{D}^{\rm R}$, define $r_{d}\in\mathscr{B}(\mathcal{S})$ and $p_{d}:\mathcal{S}\to\Delta(\mathcal{S})$ as
\begin{align*}
    r_{d}(s) := \operatorname*{\mathbb{E}}_{a\sim d(\cdot|s)}[r(s,a)] \qquad\text{and} \qquad p_{d}(s'|s) := \operatorname*{\mathbb{E}}_{a\sim d(\cdot|s)}[p(s'|s,a)].
\end{align*}  
Given stochastic kernel $p:\mathcal{S}\times\mathcal{A}\to\Delta(\mathcal{S})$, decision rule $d\in\mathcal{D}^{\rm R}$, and action $a\in\mathcal{A}$, let the operators $\mathcal{P}_d, \mathcal{P}_a:\mathscr{B}(\mathcal{S}) \to \mathscr{B}(\mathcal{S})$ be defined as $(\mathcal{P}_d u)(s) = \sum_{s'\in\mathcal{S}} p_d(s'|s) u(s')$ and $(\mathcal{P}_a u)(s) = \sum_{s'\in\mathcal{S}} p(s'|s,a)u(s')$ for all $u\in\mathscr{B}(\mathcal{S})$ and $s\in\mathcal{S}$. 

\paragraph*{Finite-horizon MDPs.} Given policy $\pi = (\pi_h)_{h\in[H]}$ and $h\in[H]$, let the \emph{value function} $V_h^{\pi,H} : \mathcal{S} \to [0,H]$ be defined as
\begin{align*}
    V_h^{\pi,H}(s) = \mathbb{E}\left[\sum_{h'=h}^{H-1} r(s_{h'},a_{h'}) ~\Bigg|~ s_h = s, a_{h'} \sim \pi_{h'}(s_{h'}), s_{{h'}+1} \sim p(\cdot|s_{h'}, a_{h'})\right] \qquad\forall s\in\mathcal{S}.
\end{align*}
Note that not all decision rules that make up $\pi$ are necessarily employed in $V_h^{\pi,H}$. A policy $\pi^\varepsilon\in\Pi^{\rm R}$ is \emph{$\varepsilon$-optimal} for $\varepsilon \geq 0$ if $V^{\pi^\varepsilon,H}_0 \geq V^{\pi,H}_0 - \varepsilon\mathbf{1}$ for all $\pi\in\Pi^{\rm R}$, and a policy $\pi^\ast$ is \emph{optimal} if $V^{\pi^\ast,H}_0 \geq V^{\pi,H}_0$ for all $\pi\in\Pi^{\rm R}$. The \emph{optimal value function} is $V^{\ast,H}_h := \operatorname{sup}_{\pi\in\Pi^{\rm R}} V_h^{\pi,H}$. It is well known that $V_0^{\ast,H} = \operatorname{sup}_{\pi\in\Pi^{\rm R}} V_0^{\pi,H} = \operatorname{sup}_{\pi\in\Pi^{\rm D}} V_0^{\pi,H}$, i.e., the optimal policy is deterministic. 

Let the \emph{Bellman operator} $\mathcal{L}_d:\mathscr{B}(\mathcal{S})\to \mathscr{B}(\mathcal{S})$ associated with $d\in\mathcal{D}^{\rm R}$ be defined as
\begin{align*}
    \mathcal{L}_d u = r_d + \mathcal{P}_d u, \qquad\text{or less compactly}, \qquad (\mathcal{L}_d u)(s) = r_d(s) + \sum_{s'\in\mathcal{S}} p_d(s'|s) u(s'),
\end{align*}
and the \emph{optimal Bellman operator} $\mathcal{L}:\mathscr{B}(\mathcal{S})\to \mathscr{B}(\mathcal{S})$ be defined as
\begin{align*}
    \mathcal{L}u = \sup_{d\in\mathcal{D}^{\rm R}} \mathcal{L}_d u, \qquad\text{or less compactly}, \qquad (\mathcal{L} u)(s) = \max_{a\in\mathcal{A}} \bigg\{r(s,a) + \sum_{s'\in\mathcal{S}} p(s'|s,a) u(s') \bigg\}.
\end{align*}
The Bellman operators $\mathcal{L}$ and $\mathcal{L}_d$ are monotonic, i.e., $u\leq v \implies \mathcal{L}u \leq \mathcal{L}v$, and non-expansive, i.e., $\|\mathcal{L}u - \mathcal{L}v\|_\infty \leq \|u-v\|_\infty$ for all $u,v\in\mathscr{B}(\mathcal{S})$ (similarly for $\mathcal{L}_d$). Finally, the value functions $V_h^{\pi,H}$ satisfy the recursive equation $V_h^{\pi,H} = \mathcal{L}_{\pi_h} V_{h+1}^{\pi,H} = r_{\pi_h} + \mathcal{P}_{\pi_h}V_{h+1}^{\pi,H}$.

\paragraph*{Infinite-horizon discounted MDPs.} Given policy $\pi = (\pi_t)_{t\in\mathbb{N}}$ and $t\in\mathbb{N}$, let the \emph{value function} $V_t^{\pi,\gamma}:\mathcal{S}\to [0,\Gamma]$ be defined as
\begin{align*}
    V_t^{\pi,\gamma}(s) = \mathbb{E}\left[\sum_{t'=0}^t \gamma^{t'} r(s_{t'},a_{t'}) ~\Bigg|~ s_0 = s, a_{t'} \sim \pi_{t-t'}(s_{t'}), s_{t'+1} \sim p(\cdot|s_{t'}, a_{t'})\right] \qquad\forall s\in\mathcal{S}.
\end{align*}
Note that not all decision rules that make up $\pi$ are necessarily employed in $V_t^{\pi,\gamma}$. We observe the unusual ordering of the decision rules $\pi_{t-t'}$ employed in choosing the $t'$-th action $a_{t'}$ instead of the more natural choice $a_{t'} \sim \pi_{t'}(s_{t'})$. Define also the value function
\begin{align*}
    V_\infty^{\pi,\gamma}(s) = \mathbb{E}\left[\sum_{t=0}^\infty \gamma^{t} r(s_{t},a_{t}) ~\Bigg|~ s_0 = s, a_{t} \sim \pi_{t}(s_{t}), s_{t+1} \sim p(\cdot|s_{t}, a_{t})\right] \qquad\forall s\in\mathcal{S}.
\end{align*}
Here the decision rule ordering is the conventional one, so $a_t \sim \pi_t(s_t)$. Similarly to the finite-horizon setting, a policy $\pi^\varepsilon \in \Pi^{\rm R}$ is \emph{$\varepsilon$-optimal} for $\varepsilon\geq 0$ if $V^{\pi^\varepsilon,\gamma}_\infty \geq V^{\pi,\gamma}_\infty - \varepsilon\mathbf{1}$ for all $\pi\in\Pi^{\rm R}$, and a policy $\pi^\ast\in\Pi^{\rm R}$ is \emph{optimal} if $V^{\pi^\ast,\gamma}_\infty \geq V^{\pi,\gamma}_\infty$ for all $\pi\in\Pi^{\rm R}$. The \emph{optimal value function} is $V_\infty^{\ast,\gamma} := \operatorname{sup}_{\Pi^{\rm R}} V_\infty^{\pi, \gamma}$. It is well known that $V_\infty^{\ast,\gamma} = \sup_{\pi\in\Pi^{\rm R}}V_\infty^{\pi,\gamma} = \sup_{d\in\mathcal{D}^{\rm D}} V_\infty^{d^\infty,\gamma}$, i.e., the optimal policy is deterministic and stationary. 

Let the \emph{Bellman operator} $\mathcal{L}_d:\mathscr{B}(\mathcal{S})\to \mathscr{B}(\mathcal{S})$ associated with $d\in\mathcal{D}^{\rm R}$ and \emph{optimal Bellman operator} $\mathcal{L}:\mathscr{B}(\mathcal{S})\to \mathscr{B}(\mathcal{S})$ be defined as
\begin{align*}
    \mathcal{L}_d u = r_d + \gamma \mathcal{P}_d u \qquad\text{and}\qquad
    \mathcal{L}u = \sup_{d\in\mathcal{D}^{\rm R}} \mathcal{L}_d u.
\end{align*}
The Bellman operators $\mathcal{L}$ and $\mathcal{L}_d$ are monotonic i.e., $u\leq v \implies \mathcal{L}u \leq \mathcal{L}v$, and contractive, i.e., $\|\mathcal{L}u - \mathcal{L}v\|_\infty \leq \gamma\|u-v\|_\infty$ for all $u,v\in\mathscr{B}(\mathcal{S})$ (similarly for $\mathcal{L}_d$). Given our deliberate choice for ordering the decision rules $\pi_{t-t'}$ in the definition of $V_t^{\pi,\gamma}$, the value functions $V_t^{\pi,\gamma}$ satisfy the recursive equation $V_{t+1}^{\pi,\gamma} = \mathcal{L}_{\pi_{t+1}}V_t^{\pi,\gamma} = r_{\pi_{t+1}} + \gamma \mathcal{P}_{\pi_{t+1}}V_{t}^{\pi,\gamma}$, similarly to the finite-horizon setting. Finally, given $d\in\mathcal{D}^{\rm R}$, the value function $V_\infty^{d^\infty,\gamma}$ is the unique fixed point of the Bellman operator $\mathcal{L}_d$, i.e., $\mathcal{L}_d V_\infty^{d^\infty,\gamma} = V_\infty^{d^\infty,\gamma}$, while the optimal value function $V_\infty^{\ast,\gamma}$ is the unique fixed point of the Bellman operator $\mathcal{L}$, i.e., $\mathcal{L}V_\infty^{\ast,\gamma} = V_\infty^{\ast,\gamma}$. As a consequence, for any $d\in\mathcal{D}^{\rm R}$ and $u\in\mathscr{B}(\mathcal{S})$, $\mathcal{L}_d^\infty u = V_\infty^{d^\infty,\gamma}$ and $\mathcal{L}^\infty u = V_\infty^{\ast,\gamma}$.


\section{Optimal Policies for Finite-Horizon MDPs}

In this section, we describe and analyse our quantum algorithm for computing $\varepsilon$-optimal policies for finite-horizon MDPs.\footnote{All the results of this section can be generalised to time-dependent MDPs $\langle \mathcal{S},\mathcal{A},\{p_h\}_{h\in[H]},\{r_h\}_{h\in[H]},H\rangle$. We stick to the time-independent case for simplicity.}  The starting point are the \emph{optimality equations}
\begin{align*}
	V_H \equiv 0  \qquad\text{and}\qquad V_h = \mathcal{L}V_{h+1} \quad \forall h\in[H].
\end{align*}
It is well known~\cite[Theorem~4.5.1]{puterman2014markov} that any set of solutions $\{V_h\}_{h\in[H]}$ to the above equations are such that $V_h = V_h^{\ast,H}$ for all $h\in[H]$. The standard value iteration algorithm~\cite{puterman2014markov} simply solves the optimality equations in an iterative fashion in time $O(HS^2A)$ if the stochastic kernel $p$ is known. If one has sampling access to $p$ instead, then the quantities $(\mathcal{P}_a V_{h+1})(s) = \sum_{s'\in\mathcal{S}} p(s'|s,a) V_{h+1}(s')$ are approximated to sufficient precision by taking enough samples $s'\sim p(\cdot|s,a)$ for all $(s,a)\in\mathcal{S}\times\mathcal{A}$. An $\varepsilon$-optimal policy can be obtained by simply approximating $(\mathcal{P}_a V_{h+1})(s)$ up to $\frac{\varepsilon}{H}$ precision for all $(s,a,h)\in\mathcal{S}\times\mathcal{A}\times[H-1]$, leading to a classical query complexity of $O\big(\frac{H^5 SA}{\varepsilon^2}\big)$ or a quantum query complexity of $O\big(\frac{H^3 SA}{\varepsilon}\big)$. 

As discussed in \Cref{sec:intro}, Sidford et al.~\cite{sidford2018near} reduced the classical complexity down to the optimal value $\widetilde{O}\big(\frac{H^3SA}{\varepsilon^2}\big)$ by employing the variance-reduction and total-variance techniques. In the quantum setting, it is possible to use quantum maximum finding to improve upon the naive approach and obtain a complexity of $\widetilde{O}\big(\frac{H^3 S\sqrt{A}}{\varepsilon}\big)$. This means approximating the quantities $(\mathcal{P}_a V_{h+1})(s)$ \emph{within} quantum maximum finding in order to find the maximum value quadratically faster. More precisely, one uses the quantum mean estimation subroutine as an oracle
\begin{align*}
	\mathcal{O}^{\rm mean}_s : |a\rangle |\bar{0}\rangle \mapsto |a\rangle\big(\sqrt{1-\delta_a}|\widetilde{\mu}_{h+1}(s,a)\rangle|\operatorname{garbage}(a)\rangle + \sqrt{\delta_a}|{\perp}_a\rangle\big) \qquad \forall s\in\mathcal{S},
\end{align*}
for quantum maximum finding in order to find $\max_{a\in\mathcal{A}}\{r(s,a) + \widetilde{\mu}_{h+1}(s,a)\}$. Here $\delta_a\in[0,1)$ is the failure probability behind quantum mean estimation within the branch of the wave function described by the component $|a\rangle$, $\widetilde{\mu}_{h+1}(s,a)\in\mathbb{R}$ is a good enough approximation of $(\mathcal{P}_a V_{h+1})(s)$, $|\operatorname{garbage}(a)\rangle$ are ``garbage'' unit vectors, and $|{\perp}_a\rangle$ are unit vectors orthogonal to $|\widetilde{\mu}_{h+1}(s,a)\rangle|\operatorname{garbage}(a)\rangle$. As previously mentioned, this was precisely the approach followed by Ambainis et al.~\cite{ambainis2025bit} and Luo et al.~\cite{luo2025quantum} together with the approximation $|\widetilde{\mu}_{h+1}(s,a) - (\mathcal{P}_a V_{h+1})(s)| \leq \frac{\varepsilon}{H}$. Both works also quantised the modern value iteration algorithm of Sidford et al.~\cite{sidford2018near}, which improves the dependence on $H$ but hinders the use of quantum maximum finding, leading to the alternative query complexity $\widetilde{O}\big(\frac{H^{2.5}SA}{\varepsilon}\big)$.

Our quantum algorithm (\Cref{algo:finite_horizon}) is based on nesting quantum mean estimation within quantum maximum finding, similarly to~\cite{ambainis2025bit,luo2025quantum}, but we further incorporate the total-variance technique from~\cite{sidford2018near} in order to improve the dependence on $H$. This means that we approximate $\widetilde{\mu}_{h+1}(s,a) \approx (\mathcal{P}_a V_{h+1})(s)$ using its variance $\sigma_{h+1}(s,a) = (\mathcal{P}_a V^2_{h+1})(s) - (\mathcal{P}_a V_{h+1})^2(s)$. Modern versions of quantum mean estimation like the one from Kothari and O'Donnell~\cite{Kothari2023mean} do not require prior knowledge of $\sigma_{h+1}(s,a)$ in order to compute $\widetilde{\mu}_{h+1}(s,a)$. However, since we need an approximation with one-side error, we also approximate $\sigma_{h+1}(s,a)$ within quantum maximum finding. This means enhancing the oracle constructed with quantum mean estimation (and to be employed by quantum maximum finding) to
\begin{align*}
	\mathcal{O}^{\rm mean}_s : |a\rangle |\bar{0}\rangle \mapsto |a\rangle\big(\sqrt{1-\delta}|\widehat{\mu}_{h+1}(s,a)\rangle|\widetilde{\mu}_{h+1}(s,a)\rangle|\widetilde{\sigma}_{h+1}(s,a)\rangle|\operatorname{garbage}(a)\rangle + \sqrt{\delta}|{\perp}_a\rangle\big),
\end{align*}
where $\widetilde{\mu}_{h+1}(s,a)\approx (\mathcal{P}_a V_{h+1})(s)$, $\widetilde{\sigma}_{h+1}(s,a) \approx \sigma_{h+1}(s,a)$, and $\widehat{\mu}_{h+1}(s,a)$ is a shifted version of $\widetilde{\mu}_{h+1}(s,a)$ so that $\widehat{\mu}_{h+1}(s,a) \leq (\mathcal{P}_a V_{h+1})(s)$. Ultimately, the registers $|\widetilde{\mu}_{h+1}(s,a)\rangle|\widetilde{\sigma}_{h+1}(s,a)\rangle$ are treated as garbage by the quantum maximum finding subroutine and only the values $\widehat{\mu}_{h+1}(s,a)$ are needed. The dependence on the variance then guarantees that the error accumulation is smaller than a naive linear dependence (\Cref{fact:upper_bound_variance2} below).

We now formally prove the correctness and query complexity of \Cref{algo:finite_horizon}.

\begin{fact}[{\cite[Lemma~F.4]{sidford2018near}}]\label{fact:upper_bound_variance2}
    Given policy $\pi=(\pi_h)_{h\in[H]}$, let $\sigma^\pi_{h+1}\in\mathscr{B}(\mathcal{S})$ be defined as $\sigma_{h+1}^\pi = \mathcal{P}_{\pi_h} (V_{h+1}^{\pi,H})^2 - (\mathcal{P}_{\pi_h} V_{h+1}^{\pi,H})^2$ for $h\in[H]$. For any policy $\pi=(\pi_h)_{h\in[H]}$ and $h\in[H]$,\footnote{We note the typo in \cite[Lemma~F.4]{sidford2018near} where $\|\cdot\|_\infty^2$ should be $\|\cdot\|_\infty$.}
    \begin{align*}
        \left\|\sum_{h'=h}^{H-1} \left(\prod_{i=h}^{h'-1} \mathcal{P}_{\pi_i} \right) \sqrt{\sigma_{h'+1}^\pi} \right\|_{\infty} \leq H^{3/2}.
    \end{align*}
\end{fact}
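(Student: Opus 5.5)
The plan is to prove the bound pointwise at each starting state $s$, by rewriting the left-hand side as an expectation along the Markov chain generated by $\pi$ and then combining Cauchy--Schwarz with the law of total variance. Fix $h$ and $s$. The operator $\prod_{i=h}^{h'-1}\mathcal{P}_{\pi_i}$ applied at $s$ is the expectation over the trajectory $s_h=s$, $a_i\sim\pi_i(s_i)$, $s_{i+1}\sim p(\cdot|s_i,a_i)$. So the quantity to bound is $\sum_{h'=h}^{H-1}\mathbb{E}\big[\sqrt{\sigma^\pi_{h'+1}(s_{h'})}\big]$, where $\sigma^\pi_{h'+1}(s_{h'})$ is the conditional variance of $V^{\pi,H}_{h'+1}(s_{h'+1})$ given $s_{h'}$. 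Since all terms are nonnegative, the $\ell_\infty$-norm is just the maximum over $s$ of this sum.

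\textbf{Step 1 (Cauchy--Schwarz).} By concavity of the square root (Jensen), each term satisfies $\mathbb{E}\sqrt{\sigma}\le\sqrt{\mathbb{E}\sigma}$. Cauchy--Schwarz over the at most $H$ indices $h'$ then gives
\begin{align*}
\sum_{h'=h}^{H-1}\mathbb{E}\Big[\sqrt{\sigma^\pi_{h'+1}(s_{h'})}\Big]\le\sqrt{H}\,\Big(\sum_{h'=h}^{H-1}\mathbb{E}\big[\sigma^\pi_{h'+1}(s_{h'})\big]\Big)^{1/2}.
\end{align*}
It therefore suffices to show that $\sum_{h'}\mathbb{E}[\sigma^\pi_{h'+1}(s_{h'})]\le H^2$.

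\textbf{Step 2 (total variance).} This is the main step. Let $R=\sum_{h'=h}^{H-1}r(s_{h'},a_{h'})$, so that $R\in[0,H]$ and $\mathrm{Var}(R)\le H^2/4$. Consider the Doob martingale $M_k=\mathbb{E}[R\mid s_h,a_h,\dots,s_k]$. Its increments are orthogonal, so $\mathrm{Var}(R)$ equals the sum of the expected conditional variances of the increments. By the recursion $V^{\pi,H}_{k}=\mathcal{L}_{\pi_k}V^{\pi,H}_{k+1}$, the value of $M$ at each step is the reward collected so far plus the remaining value $V^{\pi,H}_{k+1}(s_{k+1})$, plus the reward at $s_{k+1}$ once the action is revealed. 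I would split each step into two sub-steps: revealing $s_{k+1}$ given $(s_k,a_k)$, and revealing $a_{k+1}$. Each revealed sub-step contributes a nonnegative conditional variance. The state sub-step contributes exactly $\mathrm{Var}\big(V^{\pi,H}_{k+1}(s_{k+1})\mid s_k,a_k\big)$.

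Its expectation dominates the expectation of the $\sigma^\pi$ term. There is a subtlety when $\pi$ is randomised: $\sigma^\pi_{k+1}(s_k)$ uses the mixed kernel $\mathcal{P}_{\pi_k}$. By the law of total variance, however, it equals $\mathbb{E}_{a_k}[\mathrm{Var}(\cdot\mid s_k,a_k)]+\mathrm{Var}_{a_k}(\cdot)$. This matches the combined variance of the action sub-step and the state sub-step with respect to the variable $V^{\pi,H}_{k+1}(s_{k+1})$. The real work is to handle the reward term $r(s_k,a_k)$, which is coupled to $a_k$. If that becomes messy, I would fall back on the crude bound $\mathrm{Var}(X+Y)\le 2\mathrm{Var}X+2\mathrm{Var}Y$, which is enough to keep a bound of order $H^2$.

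\textbf{Conclusion.} Summing over steps gives $\sum_{h'}\mathbb{E}[\sigma^\pi_{h'+1}(s_{h'})]\le c\,\mathrm{Var}(R)\le cH^2/4$ with $c\le 4$, hence at most $H^2$. Substituting into Step 1 gives $\sqrt{H}\cdot H=H^{3/2}$. The key difficulty is Step 2: making the martingale decomposition rigorous with the paper's time-indexing and randomised decision rules while keeping the constant at most $1$. A cleaner alternative I would try first is the direct algebraic identity: telescope $(V^{\pi,H}_{k})^2$ using $V_k=r_{\pi_k}+\mathcal{P}_{\pi_k}V_{k+1}$. This gives $\mathcal{P}_{\pi_k}V_{k+1}^2-(\mathcal{P}_{\pi_k}V_{k+1})^2=\mathcal{P}_{\pi_k}V_{k+1}^2-V_k^2+2r_{\pi_k}\mathcal{P}_{\pi_k}V_{k+1}+r_{\pi_k}^2$. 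Applying the products of $\mathcal{P}$ and summing, the terms $V^2$ telescope to at most $V_h^2\le H^2$, and the cross terms $2r\,\mathcal{P}V\le 2H$ per step sum to about $2H^2$. This yields $\sum\mathbb{E}\sigma\le 3H^2$ and an overall bound of $\sqrt3\,H^{3/2}$. Sharpening that constant to $1$ is the remaining obstacle. It is likely handled by bounding the cross terms against the $-V_k^2$ terms, or else accepted as a constant factor, which is harmless inside $\widetilde{O}$.
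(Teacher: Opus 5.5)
Your main line (Jensen plus Cauchy--Schwarz over the $H-h$ indices to reduce to $\sum_{h'}\mathbb{E}[\sigma^\pi_{h'+1}(s_{h'})]\le H^2$, then the law of total variance) is the standard argument behind the cited lemma of Sidford et al.\ (the paper does not reprove this fact). For deterministic $\pi$, the only case the paper uses ($\pi=\pi^\ast$), it is complete: $r(s_k,\pi_k(s_k))$ is determined by $s_k$, so the Doob increments give exactly $\sum_{k=h}^{H-1}\mathbb{E}[\sigma^\pi_{k+1}(s_k)]=\operatorname{Var}(R)\le (H-h)^2/4$.

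For randomised $\pi$, two things go wrong. Your claim that the state sub-step dominates the $\sigma^\pi$ term is backwards. Your conclusion $\sum\mathbb{E}\sigma\le c\operatorname{Var}(R)$ is false for every $c$. Take $H=k+2$ and let $\pi_k(s)$ mix uniformly an action with reward $1$ leading to a state where $V_{k+1}=0$ and an action with reward $0$ leading to a state where $V_{k+1}=1$. Then $R$ is deterministic while $\sigma^\pi_{k+1}(s)=\tfrac14$.

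What your two-sub-step decomposition actually gives is the following. With $Q_k(s,a):=r(s,a)+(\mathcal{P}_aV_{k+1})(s)$, you have $\sigma^\pi_{k+1}(s_k)=\mathbb{E}_{a_k}[\operatorname{Var}(V_{k+1}(s_{k+1})\mid s_k,a_k)]+\operatorname{Var}_{a_k}[(\mathcal{P}_{a_k}V_{k+1})(s_k)]$. On the other side, $\operatorname{Var}(R)=\sum_k\mathbb{E}\bigl[\operatorname{Var}(V_{k+1}(s_{k+1})\mid s_k,a_k)+\operatorname{Var}_{a_k}[Q_k(s_k,a_k)]\bigr]$. Now $\operatorname{Var}_a[\mathcal{P}_aV_{k+1}]\le 2\operatorname{Var}_a[Q_k]+2\operatorname{Var}_a[r]\le 2\operatorname{Var}_a[Q_k]+\tfrac12$. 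This gives $\sum_k\mathbb{E}[\sigma^\pi_{k+1}(s_k)]\le 2\operatorname{Var}(R)+\tfrac{H-h}{2}\le (H-h)^2$. So your fallback does close the argument with constant $1$, but through an additive term. You should carry this out rather than assert $c\le 4$.

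Your `cleaner alternative' is actually the better route, and the `remaining obstacle' comes from a sign error. Write $V_k$ for $V_k^{\pi,H}$. The boundary term of the telescope sits at the end of the horizon: $\sum_{k=h}^{H-1}\bigl(\prod_{i=h}^{k-1}\mathcal{P}_{\pi_i}\bigr)\bigl(\mathcal{P}_{\pi_k}V_{k+1}^2-V_k^2\bigr)=\bigl(\prod_{i=h}^{H-1}\mathcal{P}_{\pi_i}\bigr)V_H^2-V_h^2=-V_h^2\le 0$, since $V_H\equiv0$. It is not $+V_h^2$.

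If you also use $\|V_{k+1}\|_\infty\le H-k-1$ instead of $H$, the remaining terms satisfy $2r_{\pi_k}\mathcal{P}_{\pi_k}V_{k+1}+r_{\pi_k}^2\le 2(H-k)-1$. These sum over $k=h,\dots,H-1$ to exactly $(H-h)^2$. Hence $\sum_{h'=h}^{H-1}\bigl(\prod_{i=h}^{h'-1}\mathcal{P}_{\pi_i}\bigr)\sigma^\pi_{h'+1}\le\bigl((H-h)^2-V_h^2\bigr)\mathbf{1}\le H^2\mathbf{1}$ for every randomised policy, with no martingale bookkeeping. Cauchy--Schwarz then yields $H^{3/2}$ directly. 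This is the same telescoping the paper uses to prove its discounted analogue, Lemma~\ref{lem:upper_bound_variance_infinite_horizon}.
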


\begin{theorem}\label{thr:quantum_finite-horizon2}
    Let $\langle \mathcal{S},\mathcal{A}, p, r,H\rangle$ be a finite-horizon MDP. Let $\delta\in(0,1)$ and $\varepsilon\in(0,\sqrt{H}]$. {\rm \Cref{algo:finite_horizon}} computes functions $\{V_h\}_{h\in[H]}\subset\mathscr{B}(\mathcal{S})$ and deterministic policy $\pi=(\pi_h)_{h\in[H]}$ such that, with probability $1-\delta$,
    \begin{align*}
        V^{\ast,H}_h - \varepsilon\mathbf{1} \leq V_h \leq V^{\pi,H}_h \leq  V^{\ast,H}_h \qquad\forall h\in[H].
    \end{align*}
    Its query complexity is
    \begin{align*}
        O\bigg(\frac{H^{2.5}S\sqrt{A}}{\varepsilon}\log\left(\frac{HSA}{\delta} \right)\log\left(\frac{HS}{\delta} \right) \bigg).
    \end{align*}
\end{theorem}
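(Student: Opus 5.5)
The plan is to analyse \Cref{algo:finite_horizon} by backward induction on $h$, from $V_H \equiv 0$ down to $h=0$. Each step is conditioned on a good event $\mathcal{E}$ on which every call to quantum mean estimation and quantum maximum finding succeeds. For every $(s,h)$ the algorithm builds the oracle $\mathcal{O}^{\rm mean}_s$ from \Cref{fact:quantum_mean_estimation_variance}. It is applied to $u = V_{h+1}$ (loaded into QRAM) and to $u = V_{h+1}^2$ (to estimate the variance via $\widetilde{\sigma} = \widetilde{\mu}^{(2)} - \widetilde{\mu}^2$), with precision parameter $\epsilon_0 = \Theta(\varepsilon/H^{1.5})$. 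The failure probability per branch is $\delta_0 = \delta/\poly(HSA)$, which costs $O(\frac{H^{1.5}}{\varepsilon}\log\frac{HSA}{\delta})$ queries per application.

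The one-sided estimate is $\widehat{\mu}_{h+1}(s,a) = \widetilde{\mu}_{h+1}(s,a) - c\,\epsilon_0\sqrt{\widetilde{\sigma}_{h+1}(s,a)} - c'\epsilon_0^2 H$. The correction terms handle the case where $\widetilde{\sigma}$ underestimates $\sigma$. The variance estimate has additive error about $\epsilon_0 H^2$, so $\sqrt{\widetilde{\sigma}}$ and $\sqrt{\sigma}$ differ by $O(\sqrt{\epsilon_0}H)$; multiplied by $\epsilon_0$ this error is absorbed using $\varepsilon \le \sqrt{H}$. On $\mathcal{E}$ this gives
\[
(\mathcal{P}_a V_{h+1})(s) - O\big(\epsilon_0\sqrt{\sigma_{h+1}(s,a)} + \epsilon_0^2 H\big) \le \widehat{\mu}_{h+1}(s,a) \le (\mathcal{P}_a V_{h+1})(s).
\]

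Next, \Cref{fact:quantum_minimum_finding} runs over $a$ with $\mathcal{O}^{\rm mean}_s$ as base oracle, returning $\pi_h(s) = \argmax_a\{r(s,a)+\widehat{\mu}_{h+1}(s,a)\}$ and $V_h(s)$ equal to that maximum. One technical point needs care: the base oracle is only approximate, because each branch carries a $\sqrt{\delta_0}$ failure component. I would handle this with the standard argument that replacing the ideal oracle by the noisy one perturbs the final state by $O(\sqrt{A}\log(1/\delta)\cdot\sqrt{\delta_0})$ in norm. Choosing $\delta_0$ polynomially small absorbs this into the $\log\frac{HSA}{\delta}$ factor. Running maximum finding with failure probability $\delta/(HS)$ and taking a union bound over all $HS$ pairs gives $\Pr[\mathcal{E}]\ge 1-\delta$. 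The total query count is $HS\cdot O(\sqrt{A}\log\frac{HS}{\delta})\cdot O(\frac{H^{1.5}}{\varepsilon}\log\frac{HSA}{\delta})$, which is the claimed bound.

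For correctness I prove three inequalities on $\mathcal{E}$.

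\textbf{Upper chain.} Since $\widehat{\mu} \le \mathcal{P}_aV_{h+1}$, we get $V_h \le \mathcal{L}_{\pi_h} V_{h+1}$. Inductively, $V_{h+1} \le V^{\pi,H}_{h+1}$, and monotonicity of $\mathcal{L}_{\pi_h}$ then gives $V_h \le \mathcal{L}_{\pi_h}V^{\pi,H}_{h+1} = V^{\pi,H}_h \le V^{\ast,H}_h$.

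\textbf{Lower bound.} Let $\pi^\ast$ be optimal. Then
\[
V_h \ge r_{\pi^\ast_h} + \mathcal{P}_{\pi^\ast_h}V_{h+1} - \xi_h,
\qquad
\xi_h = O\big(\epsilon_0\sqrt{\sigma_{h+1}(\cdot,\pi^\ast_h(\cdot))} + \epsilon_0^2H\big).
\]
Unrolling the recursion gives
\[
V^{\ast,H}_h - V_h \le \sum_{h'\ge h}\Big(\prod_{i=h}^{h'-1}\mathcal{P}_{\pi^\ast_i}\Big)\xi_{h'}.
\]

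\textbf{Main obstacle.} The remaining difficulty is that $\xi$ involves the variance of the \emph{computed} $V_{h+1}$, whereas \Cref{fact:upper_bound_variance2} bounds the variance of $V^{\pi^\ast,H}_{h+1}=V^{\ast,H}_{h+1}$. I would use the triangle inequality for standard deviations:
\[
\sqrt{\sigma_{h+1}} \le \sqrt{\sigma^{\pi^\ast}_{h+1}} + \|V^{\ast,H}_{h+1}-V_{h+1}\|_\infty.
\]
Substituting this, \Cref{fact:upper_bound_variance2} contributes $O(\epsilon_0 H^{1.5}) \le \varepsilon/2$ for a suitable constant in $\epsilon_0$. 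The extra terms $\epsilon_0\sum_{h'}\|V^{\ast,H}_{h'+1}-V_{h'+1}\|_\infty$ and $\epsilon_0^2H^2$ are handled by a bootstrap. Under the inductive hypothesis that all later errors are at most $\varepsilon$, these terms are at most $\epsilon_0 H\varepsilon + \epsilon_0^2H^2 \le \varepsilon/2$, using $\epsilon_0 H \lesssim \varepsilon/\sqrt{H} \le 1$ (again from $\varepsilon\le\sqrt{H}$). This closes the induction and yields $V^{\ast,H}_h-\varepsilon\mathbf{1}\le V_h$.
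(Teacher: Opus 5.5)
Your proposal follows the paper's proof essentially step by step. The shared ingredients are: backward induction; two variance-aware mean estimations, on $V_{h+1}$ and $V_{h+1}^2$, at precision $\Theta(\varepsilon/H^{3/2})$; and a one-sided shift of $\widetilde{\mu}_{h+1}$. Maximum finding over the noisy oracle is controlled by the same ideal-versus-real unitary perturbation bound. The upper chain uses monotonicity of $\mathcal{L}_{\pi_h}$, and the lower chain uses unrolling, the triangle inequality for standard deviations, and \Cref{fact:upper_bound_variance2}.

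There is one slip, and it breaks one-sidedness as you wrote it. Your own estimate gives $\epsilon_0\,|\sqrt{\widetilde{\sigma}_{h+1}}-\sqrt{\sigma_{h+1}}| = O(\epsilon_0\cdot\sqrt{\epsilon_0}H)=O(\epsilon_0^{3/2}H)$. So the second-order shift must be of order $\epsilon_0^{3/2}H$; the paper uses $\sqrt{2}\theta^{3/2}H$. A shift of $c'\epsilon_0^{2}H$ is not enough: when $\widetilde{\sigma}_{h+1}(s,a)\approx 0$ but $\sigma_{h+1}(s,a)=\Theta(\epsilon_0H^2)$, the bound $\widehat{\mu}_{h+1}(s,a)\le(\mathcal{P}_aV_{h+1})(s)$ can fail. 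That inequality is pointwise, so no later slack can absorb the deficit.

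With the corrected shift, the accumulated residual after unrolling is $O(\epsilon_0^{3/2}H^2)=O(\varepsilon^{3/2}H^{-1/4})$ rather than $\epsilon_0^2H^2$. This term, together with $\epsilon_0H\varepsilon$, is exactly where $\varepsilon\le\sqrt{H}$ is genuinely needed. With that change your bootstrap closes and the query count is as you state.
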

\begin{proof}
    The proof is by induction on $h\in[H]$, with the base case $h=H$ being trivial since $V_{H} \equiv 0$. Assume then that (the inequality is entry-wise)
    \begin{align*}
        V^{\ast,H}_{h'} - \varepsilon\mathbf{1} \leq V_{h'} \leq V_{h'}^{\pi,H} \leq V^{\ast,H}_{h'} \qquad\forall h'=h+1,h+2,\dots,H.
    \end{align*}
    The main idea is to employ \Cref{fact:quantum_mean_estimation_variance} to compute approximations for the true quantities
    \begin{align*}
        \mu_{h+1}(s,a) := (\mathcal{P}_a V_{h+1})(s) \qquad\text{and}\qquad  \sigma_{h+1}(s,a) := (\mathcal{P}_a V_{h+1}^2)(s) - (\mathcal{P}_a V_{h+1})^2(s).
    \end{align*}
    Let $\theta := \frac{\varepsilon}{7 H^{3/2}}$. We claim that \Cref{fact:quantum_mean_estimation_variance} can be adapted to perform all of its steps in superposition without any need for intermediary measurements. This means that we can invoke  \cref{fact:quantum_mean_estimation_variance} twice to construct a unitary
    \begin{align}\label{eq:superposition_approximate_quantities}
        \mathcal{O}_s^{(h)} : |a\rangle |\bar{0}\rangle \mapsto |a\rangle\big(\sqrt{1-\delta_a}|\widetilde{\mu}_{h+1}(s,a)\rangle|\widetilde{\sigma}_{h+1}(s,a)\rangle|\operatorname{garbage}(a)\rangle + \sqrt{\delta_a}|{\perp}_a\rangle\big),
    \end{align}
    where $\delta_a \in[0,1)$ is the failure probability of quantum mean estimation within the branch $|a\rangle$ of the wave function, $|\operatorname{garbage}(a)\rangle$ is a ``garbage'' unit complex vector accumulated through the computation, $|{\perp}_a\rangle$ is a unit complex vector orthogonal to $|\widetilde{\mu}_{h+1}(s,a)\rangle|\widetilde{\sigma}_{h+1}(s,a)\rangle|\operatorname{garbage}(a)\rangle$, and
    \begin{align}\label{eq:approximations}
        \forall(s,a)\in\mathcal{S}\times\mathcal{A} : \qquad
        \begin{aligned}
            &|\widetilde{\mu}_{h+1}(s, a) - \mu_{h+1}(s,a) | \leq \theta\sqrt{\sigma_{h+1}(s,a)},\\
            &| \widetilde{\sigma}_{h+1}(s, a) - \sigma_{h+1}(s,a) | \leq 2\theta H^2,
        \end{aligned}
    \end{align}
    using that the standard deviation of $V^2_{h+1}$ is at most $H^2$ and thus one can compute approximations of $(\mathcal{P}_a V_{h+1}^2)(s)$ and $(\mathcal{P}_a V_{h+1})^2(s)$ with error at most $\theta H^2$ each. 
    Using \eqref{eq:approximations},
    \begin{align*}
        \forall(s,a)\in\mathcal{S}\times\mathcal{A}: \qquad |\widetilde{\mu}_{h+1}(s, a) - \mu_{h+1}(s,a) | \leq \theta\sqrt{\widetilde{\sigma}_{h+1}(s,a)} + \sqrt{2}\theta^{3/2}H.
    \end{align*}
    Define then the quantities
    \begin{align*}
        \forall(s,a)\in\mathcal{S}\times\mathcal{A}: \qquad  \widehat{\mu}_{h+1}(s,a) := \widetilde{\mu}_{h+1}(s,a) - \theta\sqrt{\widetilde{\sigma}_{h+1}(s,a)} - \sqrt{2}\theta^{3/2} H,
    \end{align*}
    which have one-side error. We can express the above quantity using the optimal-policy variance $\sigma_{h+1}^\ast(s,a) := (\mathcal{P}_a (V_{h+1}^{\ast,H})^2)(s) - (\mathcal{P}_a V_{h+1}^{\ast,H})^2(s)$ since
    \begin{align*}
        \forall (s,a)\in\mathcal{S}\times\mathcal{A}: \qquad \sqrt{\widetilde{\sigma}_{h+1}(s,a)} \leq \sqrt{\sigma_{h+1}(s,a)} + \sqrt{2\theta}H \leq \sqrt{\sigma^\ast_{h+1}(s,a)} + \varepsilon + \sqrt{2\theta}H,
    \end{align*}
    using \eqref{eq:approximations} and that $\sqrt{\operatorname{Var}[V_{h+1}]} \leq \sqrt{\operatorname{Var}[V^{\ast,H}_{h+1}]} + \sqrt{\operatorname{Var}[V^{\ast,H}_{h+1} - V_{h+1}]} \leq \sqrt{\operatorname{Var}[V^{\ast,H}_{h+1}]} + \varepsilon$ if $\|V^{\ast,H}_{h+1} - V_{h+1}\|_\infty \leq \varepsilon$ according to the induction hypothesis. This means that
    \begin{align*}
        \forall(s,a)\in\mathcal{S}\times\mathcal{A} : \qquad
        \begin{aligned}
            &\widehat{\mu}_{h+1}(s, a) \leq \mu_{h+1}(s,a), \\
            &\widehat{\mu}_{h+1}(s, a) \geq \mu_{h+1}(s,a) - 2\theta\sqrt{\sigma^\ast_{h+1}(s,a)} - 2\theta\varepsilon - 4\sqrt{2}\theta^{3/2} H.
        \end{aligned}
    \end{align*}

    By subtracting the content of register $|\widetilde{\sigma}_{h+1}(s,a)\rangle$ from the register $|\widetilde{\mu}_{h+1}(s,a)\rangle$ in \eqref{eq:superposition_approximate_quantities}, it is then possible to effectively construct a black-box oracle $\mathcal{U}_s^{(h)}:|a\rangle|\bar{0}\rangle \mapsto |a\rangle(\sqrt{1-\delta_a}|\widehat{\mu}_{h+1}(s,a)\rangle + \sqrt{\delta_a}|{\perp}_a\rangle)$ using oracle $\mathcal{Q}_p,\mathcal{Q}_p^\dagger$ (up to garbage states). The maximum over $a\in\mathcal{A}$ of $r(s,a) + \widehat{\mu}_{h+1}(s,a)$ can thus be found by using quantum maximum finding (\cref{fact:quantum_minimum_finding}) with unitary $\mathcal{U}_s^{(h)}$, leading to $V_{h}(s) = \max_{a\in\mathcal{A}}\{r(s,a) + \widehat{\mu}_{h+1}(s,a)\}$ and $\pi_h(s) = \argmax_{a\in\mathcal{A}}\{r(s,a) + \widehat{\mu}_{h+1}(s,a)\}$ with probability $1 - \frac{\delta}{HS}$. This means that
    \begin{align*}
        \forall s\in\mathcal{S}: \qquad
        \begin{aligned}
            V_{h}(s) &\leq (\mathcal{L}_{\pi_h}V_{h+1})(s), \\
            V_{h}(s) &\geq (\mathcal{L}_{\pi_h^\ast}V_{h+1})(s) - 2\theta\sqrt{\sigma^\ast_{h+1}(s,\pi_h^\ast(s))} - 2\theta\varepsilon - 4\sqrt{2}\theta^{3/2} H.
        \end{aligned}
    \end{align*}
    Note that $V_{h}(s) \leq (\mathcal{L}_{\pi_h}V_{h+1})(s) \leq (\mathcal{L}_{\pi_h}V_{h+1}^{\pi,H})(s) = V_h^{\pi,H}(s)$ by using the induction hypothesis to argue that $V_{h+1} \leq V_{h+1}^{\pi,H}$. We are left with proving that $V^\ast_h(s) - \varepsilon \leq V_h(s)$. For such,
    \begin{align*}
        V_h^{\ast,H} - V_h &\leq \mathcal{L}_{\pi^\ast_h}V^{\ast,H}_{h+1} - \mathcal{L}_{\pi^\ast_h}V_{h+1} + \xi_{h} = \mathcal{P}_{\pi_h^\ast}(V^{\ast,H}_{h+1} - V_{h+1}) + \xi_{h},
    \end{align*}
    where we defined the function $\xi_h(s) = 2\theta\sqrt{\sigma^\ast_{h+1}(s,\pi_h^\ast(s))} + 2\theta\varepsilon + 4\sqrt{2}\theta^{3/2} H$. Solving the above recursion with the boundary condition $V_{H-1} \equiv V_{H-1}^{\ast,H}$,
    \begin{align*}
        V_h^{\ast,H} - V_h \leq \sum_{h'=h}^{H-1}\left(\prod_{i=h}^{h'-1} \mathcal{P}_{\pi^\ast_i}\right) \xi_{h'}.
    \end{align*}
    Using that (where $\sigma_{h+1}^\ast = \mathcal{P}_{\pi_h^\ast} (V_{h+1}^{\ast,H})^2 - (\mathcal{P}_{\pi^\ast_h} V_{h+1}^{\ast,H})^2\in\mathscr{B}(\mathcal{S})$)
    \begin{align*}
        \left\|\sum_{h'=h}^{H-1} \left(\prod_{i=h}^{h'-1}\mathcal{P}_{\pi_i^\ast}\right) \sqrt{\sigma_{h'+1}^\ast}\right\|_\infty \leq H^{3/2}, \tag{by \Cref{fact:upper_bound_variance2}}
    \end{align*}
    and that $\|\sum_{h'=h}^{H} \prod_{i=h}^{h'-1} \mathcal{P}_{\pi^\ast_i}\|_\infty \leq H$, we get that
    \begin{align*}
        V_h^{\ast,H}(s) - V_h(s) \leq 2\theta H^{3/2} + 2\theta H\varepsilon + 4\sqrt{2}\theta^{3/2}H^2 
        = \varepsilon\left(\frac{2}{7} + \frac{2\varepsilon}{7\sqrt{H}} + \frac{4\sqrt{2}}{7^{3/2}H^{1/4}} \right) < \varepsilon. \tag{$\varepsilon\leq \sqrt{H}$}
    \end{align*}

    We now analyse the success probability of \Cref{algo:finite_horizon}.\footnote{See~\cite[Appendix~A]{chen2021quantum} for a similar argument.} Ideally, we would like to implement the unitary $\mathcal{U}_{s, {\rm ideal}}^{(h)}:|a\rangle|\bar{0}\rangle \mapsto |a\rangle|\widehat{\mu}_{h+1}(s,a)\rangle$. In practice, however, we implement the unitary $\mathcal{U}_{s}^{(h)}:|a\rangle|\bar{0}\rangle \mapsto |a\rangle(\sqrt{1-\delta_a}|\widehat{\mu}_{h+1}(s,a)\rangle + \sqrt{\delta_a}|{\perp}_a\rangle)$ (up to garbage states) for  $\{\delta_a\}_{a\in\mathcal{A}} \subset [0,1)$, where the register $|\widehat{\mu}_{h+1}(s,a)\rangle$ holds the desired approximation $\widehat{\mu}_{h+1}(s,a)$ and $|{\perp}_a\rangle$ is a normalised quantum state orthogonal to $|\widehat{\mu}_{h+1}(s,a)\rangle$. Let $\delta_2 := \max_{a\in\mathcal{A}}\delta_a$. Consider a unitary $\mathcal{V}_s^{(h)}$ such that, for all $a\in\mathcal{A}$,
    \begin{enumerate}
        \item $\mathcal{V}_s^{(h)} |a\rangle(\sqrt{1-\delta_a}|\widehat{\mu}_{h+1}(s,a)\rangle + \sqrt{\delta_a}|{\perp}_a\rangle) = |a\rangle|\widehat{\mu}_{h+1}(s,a)\rangle$;
        \item $\mathcal{V}_s^{(h)} |a\rangle(\sqrt{\delta_a}|\widehat{\mu}_{h+1}(s,a)\rangle - \sqrt{1-\delta_a}|{\perp}_a\rangle) = |a\rangle|{\perp}_a\rangle$ (note that $\sqrt{\delta_2}|\widehat{\mu}_{h+1}(s,a)\rangle - \sqrt{1-\delta_a}|{\perp}_a\rangle$ is orthogonal to $\sqrt{1-\delta_a}|\widehat{\mu}_{h+1}(s,a)\rangle + \sqrt{\delta_a}|{\perp}_a\rangle$);
        \item For every $|\phi\rangle$ orthogonal to $\operatorname{span}\{|\widehat{\mu}_{h+1}(s,a)\rangle,|{\perp}_a\rangle\}$, $\mathcal{V}_s^{(h)}|a\rangle|\phi\rangle = |a\rangle|\phi\rangle$.
    \end{enumerate}
    Then $\mathcal{V}_s^{(h)}\mathcal{U}_s^{(h)}|a\rangle|\bar{0}\rangle = \mathcal{V}_s^{(h)} |a\rangle(\sqrt{1-\delta_a}|\widehat{\mu}_{h+1}(s,a)\rangle + \sqrt{\delta_a}|{\perp}_a\rangle) = |a\rangle|\widehat{\mu}_{h+1}(s,a)\rangle$. Therefore, we can just set $\mathcal{U}_{s, {\rm ideal}}^{(h)} = \mathcal{V}_s^{(h)}\mathcal{U}_s^{(h)}$ without loss of generality. As a consequence,
    \begin{align*}
        \|\mathcal{U}_{s}^{(h)} - \mathcal{U}_{s, {\rm ideal}}^{(h)}\| = \|I - \mathcal{V}_s^{(h)}\| \leq \sqrt{(1 - \sqrt{1-\delta_2})^2 + \delta_2} = \sqrt{2-2\sqrt{1-\delta_2}} \leq \sqrt{2\delta_2},
    \end{align*}
    using that $\sqrt{1-\delta_2} \geq 1 - \delta_2$. Moving on, the success probability of quantum maximum finding (\Cref{fact:quantum_minimum_finding}) is $1-\delta_1$ when employing $\mathcal{U}_{s, {\rm ideal}}^{(h)}$, for some $\delta_1\in[0,1)$. However, since it employs $\mathcal{U}_{s}^{(h)}$ instead, the success probability decreases by at most the spectral norm of the difference between the ``real'' and the ``ideal'' total unitaries. To be more precise, the ``ideal'' quantum maximum finding is a sequence of gates $W = U_1E_1U_2 E_2\cdots U_{N}E_N$, where $U_i \in \{\mathcal{U}_{s, {\rm ideal}}^{(h)},\mathcal{U}_{s, {\rm ideal}}^{(h)\dagger}\}$, $E_i$ is a circuit of elementary gates, and $N = c\sqrt{A}\log\frac{1}{\delta_1}$ with $c$ constant is the number of queries to $\mathcal{U}_{s, {\rm ideal}}^{(h)}$. The ``real'' implementation, on the other hand, is $\widetilde{W} = \widetilde{U}_1E_1\widetilde{U}_2 E_2\cdots \widetilde{U}_{N}E_N$, where $\widetilde{U}_i \in \{\mathcal{U}_{s}^{(h)},\mathcal{U}_{s}^{(h)\dagger}\}$. Then $\|W - \widetilde{W}\| \leq c\sqrt{A}\log\!\big(\frac{1}{\delta_1}\big)\|\mathcal{U}_{s, {\rm ideal}}^{(h)} - \mathcal{U}_{s}^{(h)}\| \leq c\sqrt{2\delta_2A}\log\frac{1}{\delta_1}$ and the failure probability is $\delta_1 + c\sqrt{2\delta_2A}\log\frac{1}{\delta_1}$. By taking $\delta_1 = O\big(\frac{\delta}{HS}\big)$ and $\delta_2 = O\big(\frac{\delta_1^2}{A\log^2(1/\delta_1)}\big)$, the failure probability in outputting $\max_{a\in\mathcal{A}}\{r(s,a) + \widehat{\mu}_{h+1}(s,a)\}$ is at most $\frac{\delta}{HS}$. By a usual union bound over all $s\in\mathcal{S}$ and $h\in[H]$, the failure probability is at most $\delta$.
    
    Regarding the query complexity, one call to the unitary $\mathcal{U}_s^{(h)}$ in order to compute $\widetilde{\mu}_{h+1}(s,a)$ and $\widetilde{\sigma}_{h+1}(s,a)$ in \eqref{eq:superposition_approximate_quantities} uses $O\big(\frac{1}{\theta}\log\frac{1}{\delta_2}\big) = O\big(\frac{H^{3/2}}{\varepsilon}\log\frac{HSA}{\delta}\big)$ queries to $\mathcal{Q}_p,\mathcal{Q}_p^\dagger$, while quantum maximum finding makes $O\big(\sqrt{A}\log\frac{1}{\delta_1}\big) = O\big(\sqrt{A}\log\frac{HS}{\delta}\big)$ queries to $\mathcal{U}_s^{(h)}$. Summing over all $\mathcal{S}\times[H]$, we obtain the stated query complexity.
\end{proof}

\begin{algorithm}[t!]
    \caption{Quantum algorithm for optimal policies for finite-horizon MDPs}
    \label{algo:finite_horizon}
    \begin{algorithmic}[1]  
    \Require Finite state space $\mathcal{S}$ and action space $\mathcal{A}$, horizon $H$, quantum sampling access to probability kernels $p$, failure probability $\delta\in(0, 1)$, error $\varepsilon \in (0,\sqrt{H})$.

    \Ensure $\varepsilon$-optimal deterministic policy $\pi$.

    \State $\theta \gets \frac{\varepsilon}{7H^{3/2}}$

    \State $V_{H-1}(s) \gets \max_{a\in\mathcal{A}}\{r(s,a)\}$ and $\pi_{H-1}(s) \gets \argmax_{a\in\mathcal{A}}\{r(s,a)\}$ with probability $1-\frac{\delta}{HS}$ $\forall s\in\mathcal{S}$ (\Cref{fact:quantum_minimum_finding})
        
        \For{$h=H-2,H-3,\dots,0$}

            \State Let $\mu_{h+1}(s,a) = \sum_{s'\in\mathcal{S}}p(s'|s,a) V_{h+1}(s')$ $\forall (s,a)\in\mathcal{S}\times\mathcal{A}$
            
            \State Let $\sigma_{h+1}(s,a) = \sum_{s'\in\mathcal{S}}p(s'|s,a) V_{h+1}(s')^2 - \big(\sum_{s'\in\mathcal{S}}p(s'|s,a) V_{h+1}(s')\big)^2$ $\forall (s,a)\in\mathcal{S}\times\mathcal{A}$

            \For{$s\in\mathcal{S}$}

            \State \parbox[t]{\dimexpr\linewidth-\algorithmicindent-\algorithmicindent}{Use \Cref{fact:quantum_mean_estimation_variance} to obtain a unitary
            \begin{align*}
             \mathcal{U}_s^{(h)} \!\! :|a\rangle|\bar{0}\rangle \mapsto |a\rangle\big(\sqrt{1\!-\!\delta_a}|\widehat{\mu}_{h+1}(s,a)\rangle|\widetilde{\mu}_{h+1}(s,a)\rangle|\widetilde{\sigma}_{h+1}(s,a)\rangle|\!\operatorname{garbage}(a)\rangle \!+\! \sqrt{\delta_a}|{\perp}_a\rangle\big)
            \end{align*}
            with $\max_{a\in\mathcal{A}}\delta_a = \widetilde{O}\big(\frac{\delta^2}{H^2 S^2 A}\big)$ and
            \begin{align*}
                |\widetilde{\mu}_{h+1}(s,a) - \mu_{h+1}(s,a) | &\leq \theta\sqrt{\sigma_{h+1}(s,a)},\\
                |\widetilde{\sigma}_{h+1}(s,a) - \sigma_{h+1}(s,a) | &\leq \theta H^2, \\
                \widehat{\mu}_{h+1}(s,a) := \widetilde{\mu}_{h+1}(s,a) - \theta\sqrt{\widetilde{\sigma}_{h+1}(s,a)} - \theta^{3/2} H.
            \end{align*}}

            \State \parbox[t]{\dimexpr\linewidth-\algorithmicindent-\algorithmicindent}{Use quantum maximum finding with unitary $\mathcal{U}_s^{(h)}$ (\Cref{fact:quantum_minimum_finding} with $\mathcal{U}_s^{(h)}$) to obtain $V_h(s)$ and $\pi_h(s)$ such that, with probability $1-\frac{\delta}{HS}$,
			\begin{align*}
				V_{h}(s) = \max_{a\in\mathcal{A}}\{r(s,a) + \widehat{\mu}_{h+1}(s,a)\} \quad\text{and}\quad
				 \pi_h(s) = \argmax_{a\in\mathcal{A}}\{r(s,a) + \widehat{\mu}_{h+1}(s,a)\}
			\end{align*}}\label{line:quantum_simple}
		
            \EndFor

        \EndFor
    
    \State {\bfseries return} $\pi = (\pi_0,\dots,\pi_{H-1})$
\end{algorithmic}
\end{algorithm}

\section{Optimal Policies for Infinite-Horizon Discounted MDPs}

In this section, we describe our quantum algorithm for computing $\varepsilon$-optimal policies for infinite-horizon discounted MDPs. Once again, the starting point is the set of equations
\begin{align*}
	V_0 \equiv 0 \qquad\text{and}\qquad V_{t+1} = \mathcal{L} V_t \quad \forall t\in\mathbb{N}. 
\end{align*}
Since $V_\infty^{\ast,\gamma}$ is the unique fixed point of $\mathcal{L}$, the solution $V_t$ to the above equations converges to the optimal value as $t\to\infty$. The standard value iteration algorithm~\cite{puterman2014markov} iteratively computes $V_{t}$ according to the above equations until $\|V_{t+1} - V_t\|_\infty \leq \frac{\varepsilon}{2\gamma\Gamma}$, at which point $\|V_{t+1} - V_\infty^{\ast,\gamma}\|_\infty \leq \frac{\varepsilon}{2}$ and $d_\varepsilon^\infty$ with $d_\varepsilon \in \argmax_{d\in\mathcal{D}^{\rm D}}\{\mathcal{L}_d V_{t+1}\}$ is an $\varepsilon$-optimal policy~\cite[Theorem~6.3.1]{puterman2014markov}. By the contractive property of $\mathcal{L}$, one must solve $O\big(\Gamma\log\frac{\Gamma}{\varepsilon}\big)$ iteration steps. Therefore, an $\varepsilon$-optimal policy can thus be obtained in time $\widetilde{O}(\Gamma S^2 A)$ given full knowledge of the stochastic kernel $p$. With sampling access to $p$ instead, approximating the quantities $(P_a V_t)(s)$ up to additive error $\frac{\varepsilon}{\Gamma}$ leads to an $\varepsilon$-optimal value function and $\Gamma\varepsilon$-optimal greedy policy (in the worst case) after $\widetilde{O}(\Gamma)$ iterations. The final classical or quantum complexities are $\widetilde{O}\big(\frac{\Gamma^7 SA}{\varepsilon}\big)$ and $\widetilde{O}\big(\frac{\Gamma^4 SA}{\varepsilon}\big)$, respectively.

Similarly to the finite-horizon setting, these naive complexities can be vastly improved. Using the monotonicity, variance-reduction, and total-variance techniques briefly explained in \Cref{sec:intro}, Sidford et al.~\cite{sidford2018near} proposed a modern value iteration algorithm with optimal complexity $\widetilde{O}\big(\frac{\Gamma^3 SA}{\varepsilon^2}\big)$. We stress that maintaining the monotonicity property $V_t \leq \mathcal{L}_{\pi_t} V_t$ throughout the algorithm is vital to guarantee an $\varepsilon$-optimal policy from an $\varepsilon$-optimal value function. In the quantum setting, quantum maximum finding can be employed with quantum mean estimation serving as an oracle in order to quadratically improve the dependence on $A$. By leveraging the simple approximation $|\widetilde{\mu}_t(s,a) - (\mathcal{P}_a V_t)(s)| \leq \frac{\varepsilon}{\Gamma}$ and keeping the monotonicity condition, Wang et al.~\cite{wang2021quantum} obtained a quantum algorithm with complexity $\widetilde{O}\big(\frac{\Gamma^3 S\sqrt{A}}{\varepsilon}\big)$. By instead directly quantising the classical algorithm of Sidford et al.~\cite{sidford2018near}, which improves the dependence on $\Gamma$ but hinders the use of quantum maximum finding, the authors also obtained an alternative complexity of $\widetilde{O}\big(\frac{\Gamma^{1.5}SA}{\varepsilon}\big)$.

Just like the finite-horizon setting, our quantum algorithm (\Cref{algo:infinite_horizon}) works by nesting the more advanced quantum mean estimator of Kothari and O'Donnell~\cite{Kothari2023mean} --- which takes variance into account --- within quantum maximum finding, thus incorporating the total-variance technique from~\cite{sidford2018near} in order to improve the dependence on $\Gamma$. We also keep the monotonicity condition $V_t \leq \mathcal{L}_{\pi_t} V_t$ throughout the algorithm. Two calls to quantum mean estimation construct the oracle (to be employed by quantum maximum finding)  
\begin{align*}
	\mathcal{O}^{\rm mean}_s : |a\rangle |\bar{0}\rangle \mapsto |a\rangle\big(\sqrt{1-\delta_a}|\widehat{\mu}_{t}(s,a)\rangle|\widetilde{\mu}_{t}(s,a)\rangle|\widetilde{\sigma}_{t}(s,a)\rangle|\operatorname{garbage}(a)\rangle + \sqrt{\delta_a}|{\perp}_a\rangle\big),
\end{align*}
where $\widetilde{\mu}_{t}(s,a)\approx (\mathcal{P}_a V_{t})(s)$, $\widetilde{\sigma}_{t}(s,a) \approx \sigma_{t}(s,a)$, and $\widehat{\mu}_{t}(s,a)$ is a shifted version of $\widetilde{\mu}_{t}(s,a)$ so that $\widehat{\mu}_{t}(s,a) \leq (\mathcal{P}_a V_{t})(s)$. The monotonicity condition is guaranteed by keeping the maximum between $\max_{a\in\mathcal{A}}\{r(s,a) + \gamma \widehat{\mu}_t(s,a)\}$ and the value $V_t(s)$ from the previous iteration.

Before proving the correctness and complexity of \Cref{algo:infinite_horizon}, we bound in the next result the error accumulation in the presence of the variance.

\begin{lemma}\label{lem:upper_bound_variance_infinite_horizon}
    Given policy $\pi = (\pi_t)_{t\in\mathbb{N}}$, define $\sigma^\pi_{t}\in\mathscr{B}(\mathcal{S})$ as $\sigma_{t}^\pi = \mathcal{P}_{\pi_{t+1}} (V_{t}^{\pi,\gamma})^2 - (\mathcal{P}_{\pi_{t+1}} V_{t}^{\pi,\gamma})^2$ for $t\in\mathbb{N}$. For any policy $\pi = (\pi_t)_{t\in\mathbb{N}}$ and $t\in\mathbb{N}$,
    \begin{align*}
        \left\|\sum_{t'=0}^{t} \left(\prod_{i=t'+2}^{t+1} \gamma \mathcal{P}_{\pi_i}\right) \gamma\sigma_{t'}^\pi\right\|_\infty \leq 2\Gamma^{2}.
    \end{align*}
\end{lemma}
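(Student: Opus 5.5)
The plan is a discounted law-of-total-variance telescoping argument, in the spirit of \Cref{fact:upper_bound_variance2}. Write $V_t := V_t^{\pi,\gamma}$ and $M_{t'} := \prod_{i=t'+2}^{t+1}\gamma\mathcal{P}_{\pi_i}$, with $M_t = I$. Each $M_{t'}$ is a nonnegative operator and $M_{t'}\mathbf{1} = \gamma^{t-t'}\mathbf{1}$. The key identity is $M_{t'}\,\gamma\mathcal{P}_{\pi_{t'+1}} = M_{t'-1}$, where $M_{-1} := \prod_{i=1}^{t+1}\gamma\mathcal{P}_{\pi_i}$. Since every $\sigma^\pi_{t'}\geq 0$, the function inside the norm is nonnegative. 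It therefore suffices to prove a pointwise upper bound of $2\Gamma^2\mathbf{1}$.

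\textbf{Step 1 (pointwise bound on one variance term).} We bound each term using the recursion $V_{t'+1} = r_{\pi_{t'+1}} + \gamma\mathcal{P}_{\pi_{t'+1}}V_{t'}$. It gives $(\mathcal{P}_{\pi_{t'+1}}V_{t'})^2 = (V_{t'+1}-r_{\pi_{t'+1}})^2/\gamma^2$. Hence
\begin{align*}
\gamma\sigma^\pi_{t'} = \gamma\mathcal{P}_{\pi_{t'+1}}V_{t'}^2 - \tfrac{1}{\gamma}(V_{t'+1}-r_{\pi_{t'+1}})^2 \leq \gamma\mathcal{P}_{\pi_{t'+1}}V_{t'}^2 - V_{t'+1}^2 + 2 r_{\pi_{t'+1}}V_{t'+1} \leq \gamma\mathcal{P}_{\pi_{t'+1}}V_{t'}^2 - V_{t'+1}^2 + 2\Gamma\mathbf{1}.
\end{align*}
The first inequality uses $1/\gamma\geq 1$ together with $(x-y)^2\geq x^2-2xy$ for $x,y\ge 0$. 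The second uses $0\le r\le 1$ and $0\le V_{t'+1}\le\Gamma$. The case $\gamma=0$ is trivial, since then the left-hand side vanishes.

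\textbf{Step 2 (telescoping).} Apply $M_{t'}$ to the Step 1 inequality, which is allowed because $M_{t'}$ is monotone, and sum over $t'=0,\dots,t$. The key identity turns the variance part into $\sum_{t'}\big(M_{t'-1}V_{t'}^2 - M_{t'}V_{t'+1}^2\big)$. This telescopes to $M_{-1}V_0^2 - V_{t+1}^2 \leq M_{-1}V_0^2$. Here $V_0 = r_{\pi_0}\in[0,1]$, so this is at most $\gamma^{t+1}\mathbf{1}$. The constant part contributes $2\Gamma\sum_{t'=0}^t\gamma^{t-t'}\mathbf{1} = 2\Gamma^2(1-\gamma^{t+1})\mathbf{1}$.

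\textbf{Step 3 (collecting constants).} The total is at most $\big(2\Gamma^2 - (2\Gamma^2-1)\gamma^{t+1}\big)\mathbf{1}$. Since $\Gamma\geq 1$, this is at most $2\Gamma^2\mathbf{1}$, which is the claimed bound.

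The main obstacle is matching the single factor of $\gamma$ multiplying $\sigma^\pi_{t'}$ in the statement. The natural variance identity produces $\gamma^2\sigma$ instead, and converting naively via $\sigma\le\Gamma^2$ would lose a factor of $\Gamma$. The fix is the step $\frac1\gamma(\cdot)^2\geq(\cdot)^2$ in Step 1: it keeps exactly one $\gamma$ on $\mathcal{P}_{\pi_{t'+1}}V_{t'}^2$, which is precisely what the key identity needs for the sum to telescope. The other point needing care is the boundary term $M_{-1}V_0^2$, which is absorbed by the slack $2\Gamma^2\gamma^{t+1}$ in the geometric sum.
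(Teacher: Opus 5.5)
Your proposal is correct and follows the paper's proof essentially step for step. Your $\tfrac{1}{\gamma}(\cdot)^2 \geq (\cdot)^2$ is the paper's $\gamma(\mathcal{P}V)^2 \geq (\gamma\mathcal{P}V)^2$, giving the per-term bound $\gamma\sigma^\pi_{t'} \leq \gamma\mathcal{P}_{\pi_{t'+1}}V_{t'}^2 - V_{t'+1}^2 + 2\Gamma\mathbf{1}$; the geometric sum then gives $2\Gamma^2(1-\gamma^{t+1})$, and the telescoping leaves a boundary term of at most $\gamma^{t+1}$. Your explicit remark that the summand is nonnegative, so that the pointwise upper bound controls the $\ell_\infty$-norm, is a detail the paper leaves implicit.
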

\begin{proof}
    Observe that
    \begin{align*}
        \gamma\sigma_t^\pi &\leq \gamma \mathcal{P}_{\pi_{t+1}} (V_{t}^{\pi,\gamma})^2 - (\gamma \mathcal{P}_{\pi_{t+1}} V_{t}^{\pi,\gamma})^2 \\
        &= \gamma \mathcal{P}_{\pi_{t+1}} (V_{t}^{\pi,\gamma})^2 - (V_{t+1}^{\pi,\gamma} - r_{\pi_{t+1}})^2 \tag{$V_{t+1}^{\pi,\gamma} = r_{\pi_{t+1}} + \gamma \mathcal{P}_{\pi_{t+1}}V_t^{\pi,\gamma}$} \\
        &\leq \gamma \mathcal{P}_{\pi_{t+1}} (V_{t}^{\pi,\gamma})^2 - (V_{t+1}^{\pi,\gamma})^2 + 2V_{t+1}^{\pi,\gamma}. \tag{$r_{\pi_{t+1}} \leq \mathbf{1}$}
    \end{align*}
    Therefore,
    \begin{align*}
        \sum_{t'=0}^{t} \left(\prod_{i=t'+2}^{t+1} \gamma \mathcal{P}_{\pi_i}\right) \gamma\sigma_{t'}^\pi &\leq \sum_{t'=0}^t \left(\prod_{i=t'+2}^{t+1} \gamma \mathcal{P}_{\pi_i}\right)  \big(\gamma \mathcal{P}_{\pi_{t'+1}} (V_{t'}^{\pi,\gamma})^2 - (V_{t'+1}^{\pi,\gamma})^2 + 2V_{t'+1}^{\pi,\gamma} \big) \\
        &\leq 2\Gamma^2(1-\gamma^{t+1})\mathbf{1} + \sum_{t'=0}^t \!\left(\prod_{i=t'+1}^{t+1} \!\gamma \mathcal{P}_{\pi_i}\right) \!(V_{t'}^{\pi,\gamma})^2 - \sum_{t'=0}^t \!\left(\prod_{i=t'+2}^{t+1} \!\gamma \mathcal{P}_{\pi_i}\right) \!(V_{t'+1}^{\pi,\gamma})^2 \tag{$\|V_{t'+1}^{\pi,\gamma}\|_\infty \leq \Gamma$ and $\|\sum_{t'=0}^t \prod_{i=t'+1}^t \gamma \mathcal{P}_{\pi_i}\|_\infty \leq \Gamma(1-\gamma^{t+1})$} \\
        &= 2\Gamma^2(1-\gamma^{t+1})\mathbf{1} + \left(\prod_{i=1}^{t+1} \gamma \mathcal{P}_{\pi_i}\right) (V_{0}^{\pi,\gamma})^2 - (V_{t+1}^{\pi,\gamma})^2 \tag{telescope sum}\\
        &\leq (2\Gamma^2(1-\gamma^{t+1}) + \gamma^{t+1})\mathbf{1} \tag{$\|V_0^{\pi,\gamma}\|_\infty \leq 1$}\\
        &\leq 2\Gamma^2\mathbf{1}. \tag*{\qedhere}
    \end{align*}
\end{proof}

\begin{theorem}\label{thr:quantum_infinite-horizon2}
    Let $\langle \mathcal{S},\mathcal{A}, p, r,\gamma\rangle$ be an infinite-horizon discounted MDP with $\gamma\in[0,1)$. Let $\delta\in(0,1)$, $\varepsilon\in(0,1]$, and $T := \Gamma\lceil\ln(\Gamma/\varepsilon)\rceil$. {\rm \Cref{algo:infinite_horizon}} computes functions $\{V_t\}_{t\in[T]}\subset\mathscr{B}(\mathcal{S})$ and deterministic policy $\pi=(\pi_t)_{t\in[T]}$ such that, with probability $1-\delta$,
    \begin{align*}
        V^{\ast,\gamma}_t - \varepsilon\mathbf{1} \leq V_t \leq V^{\pi,\gamma}_t \leq  V^{\ast,\gamma}_t \quad\text{and}\quad V^{\ast,\gamma}_\infty - (\varepsilon + \Gamma\gamma^{t+1})\mathbf{1} \leq V_t \leq V_\infty^{\pi_t^\infty,\gamma} \leq V_\infty^{\ast,\gamma} \qquad\forall t\in[T].
    \end{align*}
    In particular, $V^{\ast,\gamma}_\infty - 2\varepsilon\mathbf{1} \leq V^{\pi_{T-1}^\infty,\gamma}_\infty \leq  V^{\ast,\gamma}_\infty$. Its query complexity is (up to $\poly\log\log$ factors)
    \begin{align*}
        \widetilde{O}\bigg(\frac{\Gamma^{2.5}S\sqrt{A}}{\varepsilon}\log\left(\frac{\Gamma SA}{\delta} \right)\log\left(\frac{\Gamma S}{\delta} \right)\log\left(\frac{\Gamma}{\varepsilon}\right) \bigg).
    \end{align*}
\end{theorem}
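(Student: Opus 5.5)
I will mirror the proof of \Cref{thr:quantum_finite-horizon2}, with induction on $t$ instead of backward induction on $h$, and with the monotonicity step added. The algorithm starts from $V_0 := \max_a r(\cdot,a)$, with $\pi_0$ the corresponding argmax; this gives $V_0 = V_0^{\ast,\gamma}$ and $V_0 = r_{\pi_0} \le \mathcal{L}_{\pi_0}V_0$. For each $t$ and each $s$ it does the following.
\begin{itemize}
\item It builds, via two calls to \Cref{fact:quantum_mean_estimation_variance}, the superposition oracle producing $\widetilde\mu_t(s,a)$ and $\widetilde\sigma_t(s,a)$, with $|\widetilde\mu_t-\mu_t|\le\theta\sqrt{\sigma_t}$ and $|\widetilde\sigma_t-\sigma_t|\le 2\theta\Gamma^2$.
\item It sets $\widehat\mu_t := \widetilde\mu_t-\theta\sqrt{\widetilde\sigma_t}-\sqrt{2}\theta^{3/2}\Gamma$.
\item It runs quantum maximum finding on $r(s,a)+\gamma\widehat\mu_t(s,a)$.
\item If the maximum exceeds $V_t(s)$, it updates $V_{t+1}(s)$ and $\pi_{t+1}(s)$ to the maximiser; otherwise it keeps $V_{t+1}(s)=V_t(s)$ and $\pi_{t+1}(s)=\pi_t(s)$.
\end{itemize}
I take $\theta = c\,\varepsilon/\Gamma^{1.5}$ for a small constant $c$.

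The argument for the upper inequalities, by induction on $t$, runs as follows.
\begin{itemize}
\item Since $\widehat\mu_t\le\mu_t$, in the updated case $V_{t+1}(s)\le(\mathcal{L}_{\pi_{t+1}}V_t)(s)$.
\item In the kept case, the induction hypothesis $V_t\le\mathcal{L}_{\pi_t}V_t$ gives $V_{t+1}(s)=V_t(s)\le(\mathcal{L}_{\pi_{t+1}}V_t)(s)$.
\item So $V_{t+1}\le\mathcal{L}_{\pi_{t+1}}V_t$. By induction $V_t\le V_{t+1}$, hence monotonicity of $\mathcal{L}_{\pi_{t+1}}$ gives $V_{t+1}\le\mathcal{L}_{\pi_{t+1}}V_{t+1}$, which is the invariant.
\item Iterating $\mathcal{L}_{\pi_{t+1}}$ and using that $V^{\pi_{t+1}^\infty,\gamma}_\infty$ is its fixed point yields $V_{t+1}\le V_\infty^{\pi_{t+1}^\infty,\gamma}$.
\item Also $V_{t+1}\le\mathcal{L}_{\pi_{t+1}}V_t\le\mathcal{L}_{\pi_{t+1}}V_t^{\pi,\gamma}=V_{t+1}^{\pi,\gamma}$, by the recursion for $V_t^{\pi,\gamma}$ and the hypothesis $V_t\le V_t^{\pi,\gamma}$.
\end{itemize}
The bounds $V^{\pi,\gamma}_t\le V^{\ast,\gamma}_t$ and $V_\infty^{\pi_t^\infty,\gamma}\le V_\infty^{\ast,\gamma}$ are trivial.

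For the lower bound, the maximum-finding guarantee together with $V_{t+1}\ge$ the maximum gives
\begin{align*}
V_{t+1}\ \ge\ \mathcal{L}_{\pi^\ast_{t+1}}V_t-\xi_t, \qquad \xi_t = 2\gamma\theta\sqrt{\sigma^\ast_t}+2\theta\varepsilon+O(\theta^{3/2}\Gamma),
\end{align*}
where $\pi^\ast$ is the optimal (non-stationary) policy achieving $V^{\ast,\gamma}_t$. As in the finite-horizon proof, I use $\sqrt{\operatorname{Var}V_t}\le\sqrt{\operatorname{Var}V_t^{\ast,\gamma}}+\varepsilon$, which holds by the induction hypothesis $\|V_t^{\ast,\gamma}-V_t\|_\infty\le\varepsilon$. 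Since $V_{t+1}^{\ast,\gamma}=\mathcal{L}_{\pi^\ast_{t+1}}V_t^{\ast,\gamma}$, unrolling gives
\begin{align*}
V^{\ast,\gamma}_{t+1}-V_{t+1}\le\sum_{t'=0}^{t}\Bigl(\prod_{i=t'+2}^{t+1}\gamma\mathcal{P}_{\pi^\ast_i}\Bigr)\xi_{t'}.
\end{align*}
The variance term is the main obstacle. \Cref{lem:upper_bound_variance_infinite_horizon} bounds $\sum\prod\gamma\sigma$, but we need $\sum\prod\sqrt{\sigma}$. I will handle this with Cauchy--Schwarz (Jensen) for the positive operator $\sum_{t'}\prod_{i}\gamma\mathcal{P}_{\pi^\ast_i}$, which has row-sums at most $\Gamma$:
\begin{align*}
\sum\prod\sqrt{\gamma\sigma}\le\sqrt{\Gamma}\,\Bigl(\sum\prod\gamma\sigma\Bigr)^{1/2}\le\sqrt{2}\,\Gamma^{1.5}.
\end{align*}
With $\sqrt{\gamma}\sqrt{\gamma\sigma}$ bounding $\gamma\sqrt\sigma$, the total error is at most $O(\theta\Gamma^{1.5}+\theta\Gamma\varepsilon+\theta^{3/2}\Gamma^2)$, which is at most $\varepsilon$ for a suitable constant $c$, using $\varepsilon\le1$. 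This closes the induction.

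For the infinite-horizon statement, I write
\begin{align*}
V_\infty^{\ast,\gamma}-V_t\le(V_\infty^{\ast,\gamma}-V_t^{\ast,\gamma})+\varepsilon,
\end{align*}
and bound $V_\infty^{\ast,\gamma}-V_t^{\ast,\gamma}\le\gamma^{t+1}\Gamma$ by truncating the discounted sum. With $T=\Gamma\lceil\ln(\Gamma/\varepsilon)\rceil$ we have $\gamma^{T}\Gamma\le\varepsilon$, which gives the $2\varepsilon$ claim.

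The success probability is handled exactly as in \Cref{thr:quantum_finite-horizon2}:
\begin{itemize}
\item imperfect oracles inside maximum finding are handled with the same $\mathcal{V}_s$ argument;
\item I set $\delta_1=\delta/(TS)$ and $\delta_2=\widetilde O(\delta_1^2/A)$;
\item I take a union bound over $TS$ runs.
\end{itemize}
The query count is $O(\theta^{-1}\log(1/\delta_2))$ per oracle call, times $O(\sqrt A\log(1/\delta_1))$ calls, times $TS$. This gives $\widetilde O(\Gamma^{2.5}S\sqrt A\,\varepsilon^{-1})$ with the stated logarithmic factors.
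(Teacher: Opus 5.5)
Your proposal is correct and follows essentially the same route as the paper. The shared steps are: the same shifted one-sided estimate $\widehat\mu_t$ nested inside quantum maximum finding, the same max-with-previous-value update for monotonicity, the same unrolled error recursion bounded via Cauchy--Schwarz and \Cref{lem:upper_bound_variance_infinite_horizon}, and the same truncation, union-bound and query-count arguments. The differences are only cosmetic: a generic constant $c$ instead of $\theta=\varepsilon/(7\Gamma^{3/2})$, using $\varepsilon\le 1$ rather than $\varepsilon\le\sqrt{\Gamma}$, and stating the invariant $V_t\le\mathcal{L}_{\pi_t}V_t$ explicitly in the induction hypothesis, which the paper uses in its Case II without listing it.
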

\begin{algorithm}[t!]
    \caption{Quantum algorithm for optimal policies for infinite-horizon discounted MDPs}
    \label{algo:infinite_horizon}
    \begin{algorithmic}[1]  
    \Require Finite state space $\mathcal{S}$ and action space $\mathcal{A}$, discount factor $\gamma\in[0,1)$, quantum sampling access to probability kernels $p$, failure probability $\delta\in(0, 1)$, error $\varepsilon \in (0,\sqrt{\Gamma}]$.

    \Ensure $2\varepsilon$-optimal stationary deterministic policy $\pi_{T-1}^\infty$. 

    \State Let $T \gets \Gamma\lceil\ln(\Gamma/\varepsilon)\rceil$ and $\theta \gets \frac{\varepsilon}{7\Gamma^{3/2}}$

    \State $V_0(s) \gets \max_{a\in\mathcal{A}}\{r(s,a)\}$ and $\pi_0(s) \gets \argmax_{a\in\mathcal{A}}\{r(s,a)\}$ with probability $1-\frac{\delta}{T S}$ $\forall s\in\mathcal{S}$ (\Cref{fact:quantum_minimum_finding})
        
        \For{$t\in[T-1]$}

            \State Let $\mu_{t}(s,a) = \sum_{s'\in\mathcal{S}}p(s'|s,a) V_{t}(s')$ $\forall (s,a)\in\mathcal{S}\times\mathcal{A}$
            
            \State Let $\sigma_{t}(s,a) = \sum_{s'\in\mathcal{S}}p(s'|s,a) V_{t}(s')^2 - \big(\sum_{s'\in\mathcal{S}}p(s'|s,a) V_{t}(s')\big)^2$ $\forall (s,a)\in\mathcal{S}\times\mathcal{A}$

            \For{$s\in\mathcal{S}$}

            \State \parbox[t]{\dimexpr\linewidth-\algorithmicindent-\algorithmicindent}{Use \Cref{fact:quantum_mean_estimation_variance} to obtain a unitary
            \begin{align*}
             \mathcal{U}_s^{(t)}:|a\rangle|\bar{0}\rangle \mapsto |a\rangle\big(\sqrt{1-\delta_a}|\widehat{\mu}_{t}(s,a)\rangle|\widetilde{\mu}_{t}(s,a)\rangle|\widetilde{\sigma}_{t}(s,a)\rangle|\operatorname{garbage}(a)\rangle + \sqrt{\delta_a}|{\perp}_a\rangle\big)
            \end{align*}
            with $\max_{a\in\mathcal{A}}\delta_a = \widetilde{O}\big(\frac{\delta^2}{T^2 S^2 A}\big)$ and
            \begin{align*}
                |\widetilde{\mu}_{t}(s,a) - \mu_{t}(s,a) | &\leq \theta\sqrt{\sigma_{t}(s,a)},\\
                |\widetilde{\sigma}_{t}(s,a) - \sigma_{t}(s,a) | &\leq \theta \Gamma^2, \\
                \widehat{\mu}_{t}(s,a) := \widetilde{\mu}_{t}(s,a) - \theta\sqrt{\widetilde{\sigma}_{t}(s,a)} - \theta^{3/2} \Gamma.
            \end{align*}}

            \State \parbox[t]{\dimexpr\linewidth-\algorithmicindent-\algorithmicindent}{Use quantum maximum finding with unitary $\mathcal{U}_s^{(t)}$ (\Cref{fact:quantum_minimum_finding} with $\mathcal{U}_s^{(t)}$) to obtain $V'_{t+1}(s)$ and $\pi'_{t+1}(s)$ such that, with probability $1-\frac{\delta}{TS}$,
			\begin{align*}
				V'_{t+1}(s) = \max_{a\in\mathcal{A}}\{r(s,a) + \gamma\widehat{\mu}_{t}(s,a)\} \quad\text{and}\quad
				\pi'_{t+1}(s) = \argmax_{a\in\mathcal{A}}\{r(s,a) + \gamma\widehat{\mu}_{t}(s,a)\}
			\end{align*}}\label{line:quantum_simple_infinite2}

            \If{$V_{t+1}'(s)\geq V_t(s)$} $V_{t+1}(s) \gets V'_{t+1}(s)$ and $\pi_{t+1}(s) \gets \pi_{t+1}'(s)$

            \Else{} $V_{t+1}(s) \gets V_t(s)$ and $\pi_{t+1}(s) \gets  \pi_{t}(s)$ \EndIf
		
            \EndFor

        \EndFor
    
    \State {\bfseries return} $(\pi_0,\dots,\pi_{T-1})$
\end{algorithmic}
\end{algorithm}
\begin{proof}
    The proof is similar to \Cref{thr:quantum_finite-horizon2}. We shall prove by induction on $t\in\mathbb{N}$ that
    \begin{align*}
        V^{\ast,\gamma}_t - \varepsilon\mathbf{1} \leq V_t \leq V^{\pi,\gamma}_t \leq  V^{\ast,\gamma}_t \quad\text{and}\quad V_t \leq V_\infty^{\pi_t^\infty,\gamma} \leq V_\infty^{\ast,\gamma} \qquad\forall t\in[T].
    \end{align*}
    The inequality $V^{\ast,\gamma}_\infty - (\varepsilon + \Gamma\gamma^{t+1})\mathbf{1} \leq V_t $ then follows from the above and the fact that (let $d^\ast \in \arg\max_{d\in\mathcal{D}^{\rm D}} V^{d^\infty,\gamma}_\infty$ be an optimal deterministic decision rule)
    \begin{align*}
        V_\infty^{\ast,\gamma}(s) &= \mathbb{E}\left[\sum_{t'=0}^\infty \gamma^{t'} r(s_{t'},d^\ast(s_{t'})) ~\Bigg|~ s_0 = s, s_i \sim p(\cdot|s_i,d^\ast(s_i))\right] \\
        &\leq \mathbb{E}\left[\sum_{t'=0}^t \gamma^{t'} r(s_{t'},d^\ast(s_{t'})) ~\Bigg|~ s_0 = s, s_i \sim p(\cdot|s_i,d^\ast(s_i))\right] + \sum_{t'=t+1}^\infty \gamma^{t'} \tag{$r(s,a) \leq 1$} \\
        &\leq V_t^{\ast,\gamma}(s) + \gamma^{t+1}\Gamma.
    \end{align*}
    Moreover, $\gamma^{T}\Gamma \leq e^{-T(1-\gamma)}\Gamma \leq \varepsilon$ if $T = \Gamma\lceil\ln(\Gamma/\varepsilon)\rceil$, using that $x\leq e^{-(1-x)}$ for all $x\in\mathbb{R}$.
    
    The base case $t=0$ of the induction is trivial since $V_0(s) = \max_{a\in\mathcal{A}}\{r(s,a)\} = V_0^{\ast,\gamma}(s)$ and $\pi_0(s) = \argmax_{a\in\mathcal{A}}\{r(s,a)\}$. Assume then that \Cref{algo:infinite_horizon} has computed functions $V_0,\dots,V_t$ and decision rules $\pi_0,\dots,\pi_{t}$ (which form policy $\pi$) such that
    \begin{align*}
        V^{\ast,\gamma}_{t'} - \varepsilon \mathbf{1} \leq V_{t'} \leq V^{\pi,\gamma}_{t'} \leq  V^{\ast,\gamma}_{t'} \quad\text{and}\quad V_{t'} \leq V_\infty^{\pi_{t'}^\infty,\gamma} \leq V_\infty^{\ast,\gamma} \qquad\forall t'=0,\dots,t,
    \end{align*}
    and consider the time step $t+1$. Let the true quantities
    \begin{align*}
        \mu_{t}(s,a) := (\mathcal{P}_a V_{t})(s) \qquad\text{and}\qquad 
        \sigma_{t}(s,a) := (\mathcal{P}_a V_{t}^2)(s) - (\mathcal{P}_a V_{t})^2(s).
    \end{align*}
    Let $\theta := \frac{\varepsilon}{7 \Gamma^{3/2}}$. Once again we invoke \cref{fact:quantum_mean_estimation_variance} twice to construct the unitary
    \begin{align}\label{eq:superposition_approximate_quantities_infinite}
        \mathcal{O}_s^{(t)} : |a\rangle |\bar{0}\rangle \mapsto |a\rangle\big(\sqrt{1-\delta_a}|\widetilde{\mu}_{t}(s,a)\rangle|\widetilde{\sigma}_{t}(s,a)\rangle|\operatorname{garbage}(a)\rangle + \sqrt{\delta_a}|{\perp}_a\rangle\big),
    \end{align}
    where $\delta_a\in[0,1)$ is the failure probability of quantum mean estimation within the branch $|a\rangle$ of the wave function, $|\operatorname{garbage}(a)\rangle$ is a ``garbage'' unit complex vector accumulated through the computation, $|{\perp}_a\rangle$ is a unit complex vector orthogonal to $|\widetilde{\mu}_{t}(s,a)\rangle|\widetilde{\sigma}_{t}(s,a)\rangle|\operatorname{garbage}(a)\rangle$, and
    \begin{align}\label{eq:approximations_infinite}
        \forall(s,a)\in\mathcal{S}\times\mathcal{A} : \qquad
        \begin{aligned}
            &|\widetilde{\mu}_{t}(s, a) - \mu_{t}(s,a) | \leq \theta\sqrt{\sigma_t(s,a)},\\
            &| \widetilde{\sigma}_{t}(s, a) - \sigma_{t}(s,a) | \leq 2\theta \Gamma^2,
        \end{aligned}
    \end{align}
    using that the standard deviation of $V^2_{t}$ is at most $\Gamma^2$ and thus one can compute approximations of $(\mathcal{P}_a V_{t}^2)(s)$ and $(\mathcal{P}_a V_{t})^2(s)$ with error at most $\theta \Gamma^2$ each. 
    Using \eqref{eq:approximations_infinite},
    \begin{align*}
        \forall(s,a)\in\mathcal{S}\times\mathcal{A}: \qquad |\widetilde{\mu}_{t}(s, a) - \mu_{t}(s,a) | \leq \theta\sqrt{\widetilde{\sigma}_{t}(s,a)} + \sqrt{2}\theta^{3/2}\Gamma.
    \end{align*}
    Define then the quantities
    \begin{align*}
        \forall(s,a)\in\mathcal{S}\times\mathcal{A}: \qquad  \widehat{\mu}_{t}(s,a) := \widetilde{\mu}_{t}(s,a) - \theta\sqrt{\widetilde{\sigma}_{t}(s,a)} - \sqrt{2}\theta^{3/2} \Gamma,
    \end{align*}
    which have one-side error. We can express the above quantity using the optimal-policy variance $\sigma^\ast_t(s,a) := (\mathcal{P}_a (V_{t}^{\ast,\gamma})^2)(s) - (\mathcal{P}_a V_{t}^{\ast,\gamma})^2(s)$ since
    \begin{align*}
        \forall (s,a)\in\mathcal{S}\times\mathcal{A}: \qquad \sqrt{\widetilde{\sigma}_{t}(s,a)} \leq \sqrt{\sigma_{t}(s,a)} + \sqrt{2\theta}\Gamma \leq \sqrt{\sigma_t^\ast(s,a)} + \varepsilon + \sqrt{2\theta}\Gamma,
    \end{align*}
    where we used \eqref{eq:approximations_infinite} and that $\sqrt{\operatorname{Var}[V_{t}]} \leq \sqrt{\operatorname{Var}[V^{\ast,\gamma}_{t}]} + \sqrt{\operatorname{Var}[V^{\ast,\gamma}_{t} - V_{t}]} \leq \sqrt{\operatorname{Var}[V^{\ast,\gamma}_{t}]} + \varepsilon$ if $\|V^{\ast,\gamma}_{t} - V_{t}\|_\infty \leq \varepsilon$ according to the induction hypothesis. Hence
    \begin{align*}
        \forall(s,a)\in\mathcal{S}\times\mathcal{A} : \qquad
        \begin{aligned}
            &\widehat{\mu}_{t}(s, a) \leq \mu_{t}(s,a), \\
            &\widehat{\mu}_{t}(s, a) \geq \mu_{t}(s,a) - 2\theta\sqrt{\sigma_t^\ast(s,a)} - 2\theta\varepsilon - 4\sqrt{2}\theta^{3/2} \Gamma.
        \end{aligned}
    \end{align*}

    By subtracting the content of register $|\widetilde{\sigma}_{t}(s,a)\rangle$ from the register $|\widetilde{\mu}_{t}(s,a)\rangle$ in \eqref{eq:superposition_approximate_quantities_infinite}, it is then possible to effectively construct a black-box unitary $\mathcal{U}_s^{(t)}:|a\rangle|\bar{0}\rangle \mapsto |a\rangle(\sqrt{1-\delta_a}|\widehat{\mu}_{t}(s,a)\rangle + \sqrt{\delta_a}|{\perp}_a\rangle)$ using oracle $\mathcal{Q}_p,\mathcal{Q}_p^\dagger$ (up to garbage states). The maximum over $a\in\mathcal{A}$ of $r(s,a) + \gamma\widehat{\mu}_{t}(s,a)$ can thus be found by using quantum maximum finding (\Cref{fact:quantum_minimum_finding}) with unitary $\mathcal{U}_s^{(t)}$, leading to $V'_{t+1}(s) = \max_{a\in\mathcal{A}}\{r(s,a) + \gamma\widehat{\mu}_{t}(s,a)\}$ and $\pi'_{t+1}(s) = \argmax_{a\in\mathcal{A}}\{r(s,a) + \gamma\widehat{\mu}_{t}(s,a)\}$ with probability $1 - \frac{\delta}{HS}$. We then let 
    \begin{align*}
        V_{t+1}(s) = \max\{V_{t+1}'(s),V_t(s)\} \quad\text{and}\quad \pi_{t+1}(s) = \begin{cases}
             \pi_{t+1}'(s) &\text{if}~ V_{t+1}'(s)\geq V_t(s),\\
             \pi_{t}(s) &\text{if}~ V_{t+1}'(s) < V_t(s).
        \end{cases}
    \end{align*}
    This means that
    \begin{align}\label{eq:lower_bound_V_t1}
        \forall s\in\mathcal{S}: \qquad V_{t+1}(s) \geq (\mathcal{L}_{\pi^\ast_{t+1}}V_{t})(s) - 2\gamma\theta\sqrt{\sigma_t^\ast(s,\pi_{t+1}^\ast(s))} - 2\gamma\theta\varepsilon - 4\sqrt{2}\gamma\theta^{3/2} \Gamma.
    \end{align}
    We now prove that $V_{t+1} \leq \mathcal{L}_{\pi_{t+1}} V_{t} = V_{t+1}^{\pi,\gamma} \leq \mathcal{L}_{\pi_{t+1}} V_{t+1}$. There are two cases to analyse. 
    
    \textbf{Case I:} $V'_{t+1}(s) \geq V_t(s)$. Then
    \begin{align*}
        V_{t+1}(s) = r(s,\pi_{t+1}(s)) + \gamma \widehat{\mu}_t(s,\pi_{t+1}(s)) \leq (\mathcal{L}_{\pi_{t+1}} V_t)(s) \leq (\mathcal{L}_{\pi_{t+1}} V_{t+1})(s),
    \end{align*}
    since $\widehat{\mu}_t(s,a) \leq \mu_t(s,a)$ and $V_{t+1} \geq V_t$.
     
    \textbf{Case II:} $V'_{t+1}(s) < V_t(s)$. Then
    \begin{align*}
        V_{t+1}(s) = V_t(s) \leq (\mathcal{L}_{\pi_{t}} V_t)(s) = (\mathcal{L}_{\pi_{t+1}} V_{t})(s) \leq (\mathcal{L}_{\pi_{t}} V_{t+1})(s), 
    \end{align*}
    using the induction hypothesis in $V_t(s) \leq (\mathcal{L}_{\pi_{t}} V_t)(s)$ and that $V_{t+1} \geq V_t$ and $\pi_{t+1}(s) = \pi_t(s)$. This proves that $V_{t+1} \leq V_{t+1}^{\pi,\gamma} \leq \mathcal{L}_{\pi_{t+1}} V_{t+1}$, which by the monotonicity of the Bellman operator $\mathcal{L}_{\pi_{t+1}}$ and the fact that $V_\infty^{\pi_{t+1}^\infty,\gamma}$ is its fixed point, leads to $V_{t+1} \leq \mathcal{L}_{\pi_{t+1}}^\infty V_{t+1} = V_\infty^{\pi_{t+1}^\infty,\gamma} \leq V_\infty^{\ast,\gamma}$.
    
    We move on to proving that $V^{\ast,\gamma}_{t+1} - \varepsilon\mathbf{1} \leq V_{t+1}$. For such,
    \begin{align*}
        V_{t+1}^{\ast,\gamma} - V_{t+1} &\leq \mathcal{L}_{\pi_{t+1}^\ast}V^{\ast,\gamma}_{t} - \mathcal{L}_{\pi_{t+1}^\ast}V_{t} + \gamma\xi_{t+1} = \gamma \mathcal{P}_{\pi_{t+1}^\ast}(V^{\ast,\gamma}_{t} - V_{t}) + \gamma\xi_{t+1}, \tag{by \eqref{eq:lower_bound_V_t1} and $V_{t+1}^{\ast,\gamma} = \mathcal{L}_{\pi_{t+1}^\ast}V_t^{\ast,\gamma}$}
    \end{align*}
    where we defined $\xi_{t+1}(s) = 2\theta\sqrt{\sigma_t^\ast(s,\pi_{t+1}^\ast(s))} + 2\theta\varepsilon + 4\sqrt{2}\theta^{3/2} \Gamma$. Let $\sigma^\ast_t = \mathcal{P}_{\pi_{t+1}^\ast} (V_{t}^{\ast,\gamma})^2 - (\mathcal{P}_{\pi_{t+1}^\ast} V_{t}^{\ast,\gamma})^2\in\mathscr{B}(\mathcal{S})$. Solving the above recursion with the boundary condition $V_0^{\ast,\gamma} \equiv V_{0}$,
    \begin{align*}
        V_{t+1}^{\ast,\gamma} - V_{t+1} &\leq \sum_{t'=1}^{t+1} \left(\prod_{i=t'+1}^{t+1} \gamma \mathcal{P}_{\pi_i^\ast}\right) \gamma\xi_{t'}. 
    \end{align*}
    Using that
    \begin{align*}
        \left\|\sum_{t'=0}^{t} \left(\prod_{i=t'+2}^{t+1} \gamma \mathcal{P}_{\pi_i^\ast}\right) \gamma\sqrt{\sigma_{t'}^\ast}\right\|_\infty \leq \left\|\Gamma\sum_{t'=0}^{t} \left(\prod_{i=t'+2}^{t+1} \gamma \mathcal{P}_{\pi_i^\ast}\right) \gamma^2\sigma_{t'}^\ast\right\|_\infty^{1/2} \leq \sqrt{2}\Gamma^{3/2}, \tag{by Cauchy-Schwarz and \Cref{lem:upper_bound_variance_infinite_horizon}}
    \end{align*}
    and that $\|\sum_{t'=1}^{t+1} \prod_{i=t'+1}^{t+1} \gamma \mathcal{P}_{\pi_i^\ast}\|_\infty \leq \sum_{t'=0}^{t} \gamma^{t'} \leq \Gamma$, then finally
    \begin{align*}
        V_{t+1}^{\ast,\gamma}(s) - V_{t+1}(s) \leq 2\sqrt{2}\theta \Gamma^{3/2} + 2\theta \Gamma\varepsilon + 4\sqrt{2}\theta^{3/2}\Gamma^2 
        = \varepsilon\left(\frac{2\sqrt{2}}{7} + \frac{2\varepsilon}{7\sqrt{\Gamma}} + \frac{4\sqrt{2}}{7^{3/2}\Gamma^{1/4}} \right) < \varepsilon. \tag{$\theta = \frac{\varepsilon}{7\Gamma^{3/2}}$ and $\varepsilon\leq \sqrt{\Gamma}$}
    \end{align*}

    The error analysis of \Cref{algo:infinite_horizon} is very similar to \Cref{algo:finite_horizon}. The failure probability is $\delta_1 + c\sqrt{2\delta_2 A}\log\frac{1}{\delta_1}$, where $\delta_1$ is the error associated with quantum maximum finding (\Cref{fact:quantum_minimum_finding}) and $\delta_2$ is the error behind quantum mean estimation (\Cref{fact:quantum_mean_estimation_variance}). Choosing $\delta_1 = O\big(\frac{\delta}{TS}\big)$ and $\delta_2 = O\big(\frac{\delta_1^2}{A\log^2(1/\delta_1)}\big)$ leads to outputting $\max_{a\in\mathcal{A}}\{r(s,a) + \gamma \widehat{\mu}_t(s,a)\}$ with probability at least $1-\frac{\delta}{TS}$. A union bound over all $(s,t)\in\mathcal{S}\times[T]$ leads to a failure probability at most $\delta$.

    Regarding the query complexity, for each $s\in\mathcal{S}$, one call to the unitary $\mathcal{U}_s^{(t)}$ in order to compute $\widetilde{\mu}_{t}(s,a)$ and $\widetilde{\sigma}_{t}(s,a)$ in \eqref{eq:superposition_approximate_quantities_infinite} uses $O\big(\frac{1}{\theta}\log\frac{1}{\delta_2}\big) = \widetilde{O}\big(\frac{\Gamma^{3/2}}{\varepsilon}\log\frac{\Gamma SA}{\delta}\big)$ queries to $\mathcal{Q}_p$, while quantum maximum finding makes $O\big(\sqrt{A}\log\frac{1}{\delta_1}\big) = \widetilde{O}\big(\sqrt{A}\log\frac{\Gamma S}{\delta}\big)$ queries to $\mathcal{U}_s^{(t)}$. Summing over all $\mathcal{S}\times[T]$, we obtain the stated query complexity.
\end{proof}

\bibliographystyle{plain}
\bibliography{bibliography}

\end{document}